\newtheorem{lemma}{Lemma}
\newtheorem{theorem}[lemma]{Theorem}
\newtheorem{fact}[lemma]{Fact}
\newcommand{\FAM}{\mathfrak{F}}
\newcommand{\Ds}{\mathfrak{D}}
\newcommand{\R}{\mathbb{R}}
\newcommand{\Th}{\theta}
\newcommand{\Pb}{\bar{P}}
\newcommand{\Tb}{\bar{T}}
\newcommand{\Reals}{\mathbb{R}}
\newcommand{\m}{^{\circ}}
\newcommand{\strip}{\mathfrak{S}}
\newcommand{\eps}{\varepsilon}
\newcommand{\dop}[2]{\langle #1, #2\rangle}
\newenvironment{denseitems}{\list{$\bullet$}%
  {\labelwidth3em\itemsep0pt\parsep0pt\topsep0.6ex}}{\endlist}
\let\geq\geqslant
\let\leq\leqslant
\renewcommand{\showkeyslabelformat}[1]{\normalfont\tiny\ttfamily#1}
\def\section{\@startsection {section}{1}{\z@}{-3.5ex plus -1ex minus
    -.2ex}{2.3ex plus .2ex}{\large\bf}}
\def\subsection{\@startsection{subsection}{2}{\z@}{-3.25ex plus -1ex
    minus -.2ex}{1.5ex plus .2ex}{\normalsize\bf}}
\def\@fnsymbol#1{\ensuremath{\ifcase#1\or *\or 1\or 2\or 3\or 4\or
    5\or 6\or 7 \or 8\ or 9 \or 10\or 11 \else\@ctrerr\fi}}
\title{A Generalization of the Convex Kakeya Problem%
  \thanks{H.-K.A.~was supported by NRF grant 2011-0030044 (SRC-GAIA)
    funded by the government of Korea.  J.G.~is the recipient of an
    Australian Research Council Future Fellowship (project number
    FT100100755).  O.C.~was supported in part by NRF grant
    2011-0030044 (SRC-GAIA), and in part by NRF grant~2011-0016434,
    both funded by the government of Korea.}}
\author{Hee-Kap Ahn\thanks{POSTECH, South Korea.
Email: {heekap@postech.ac.kr}.}
\and Sang Won Bae\thanks{Kyonggi University, South Korea.
Email: {swbae@kgu.ac.kr}.}
\and Otfried Cheong\thanks{KAIST, South Korea.
Email: {otfried@kaist.edu}.}
\and Joachim  Gudmundsson\thanks{University of Sydney and NICTA,
  Australia. Email: {joachim.gudmundsson@sydney.edu.au}.}
\and Takeshi Tokuyama\thanks{Tohoku University, Japan.
  Email: {tokuyama@dais.is.tohoku.ac.jp}.}
\and Antoine Vigneron\thanks{KAUST, Saudi Arabia.
  Email: {antoine.vigneron@kaust.edu.sa}.}}
\begin{document}
\maketitle

\begin{abstract}
  Given a set of line segments in the plane, not necessarily finite,
  what is a convex region of smallest area that contains a translate
  of each input segment?  This question can be seen as a
  generalization of Kakeya's problem of finding a convex region of
  smallest area such that a needle can be rotated through 360 degrees
  within this region.  We show that there is always an optimal region
  that is a triangle, and we give an optimal $\Theta(n \log n)$-time
  algorithm to compute such a triangle for a given set of $n$
  segments. We also show that, if the goal is to minimize the
  perimeter of the region instead of its area, then placing the
  segments with their midpoint at the origin and taking their convex
  hull results in an optimal solution. Finally, we show that for any
  compact convex figure $G$, the smallest enclosing disk of~$G$ is a
  smallest-perimeter region containing a translate of every rotated
  copy of~$G$.
\end{abstract}

\section{Introduction}

Let $\FAM$ be a family of objects in the plane.  A \emph{translation
  cover} for $\FAM$ is a set~$K$ such that any object in~$\FAM$ is
contained in a translate of $K$~\cite{Wetzel1973}.  We are interested
in determining a \emph{convex} translation cover for~$\FAM$ of
smallest possible area or perimeter.

Since the convex hull of a set of objects is the smallest convex
figure that contains them, this problem can be reformulated as
translating the objects in~$\FAM$ such that the perimeter or the area
of their convex hull is minimized.  When $\FAM$ consists of $n$
objects, we can fix one object and translate the remaining
$n-1$~objects. Therefore we can use a vector in $\Reals^{2(n-1)}$ to
represent the translations of $n-1$~objects.  Consider the functions
$\Reals^{2(n-1)} \to \Reals$ that take a vector in $\Reals^{2(n-1)}$
and return the perimeter and the area of the convex hull of the fixed
object and the translated copies of the $n-1$ other objects. Ahn and
Cheong~\cite{ahncheong2011} showed that for the perimeter case, this
function is convex.  They also showed that for the area case, the
function is convex if $n = 2$.  However, this is no longer true when
$n > 2$, as the following example shows.  Let $s_{1}$ be a vertical
segment of length one, and let $s_{2}$ and $s_{3}$ be copies of
$s_{1}$ rotated by~$60^{\circ}$ and~$120^{\circ}$. Then the area of
their convex hull is minimized when they form an equilateral triangle,
so there are two isolated local minima, as shown in
\figurename~\ref{fig:nonconvex}.  This explains why minimizing the
perimeter appears to be a much easier problem than minimizing the area
of a translation cover.
\begin{figure}[tb]
  \center
  \includegraphics[width=.6\textwidth]{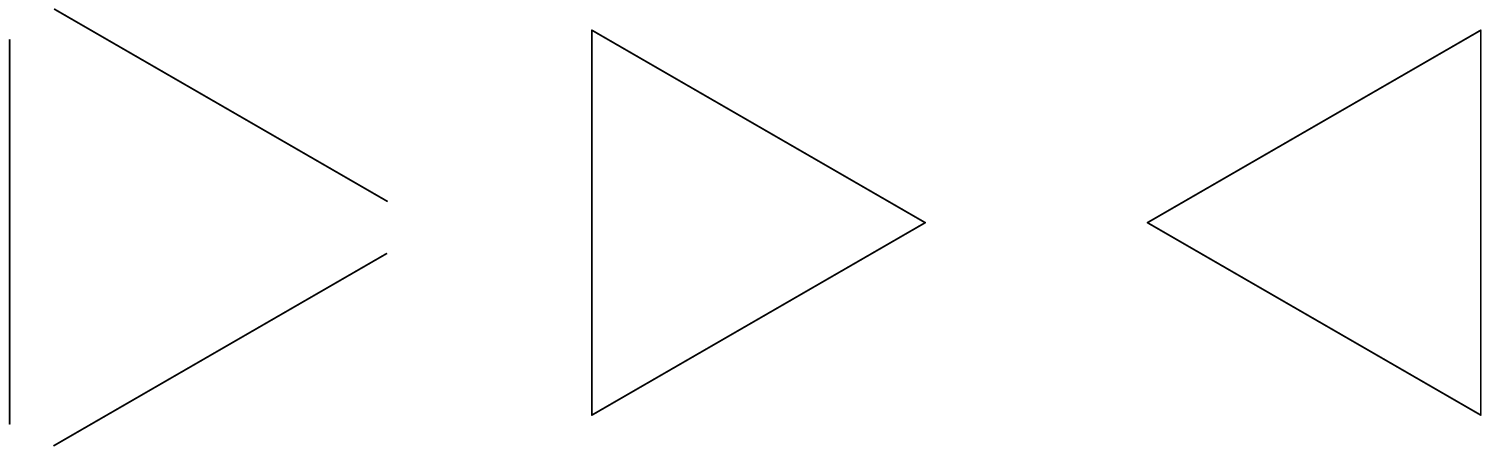}
  \caption{The area function $\omega \colon \Reals^{2(n-1)} \to
    \Reals$ of the convex hull of $n\geq 3$ segments is not
    necessarily convex.}
  \label{fig:nonconvex}
\end{figure}

As a special case of translation covers, we can consider the situation
where the family~$\FAM$ consists of copies of a given compact convex
figure~$G$, rotated by all angles in~$[0,2\pi)$.  In other words, we
are asking for a smallest possible convex set~$K$ such that $G$ can be
placed in~$K$ in every possible orientation.  We will call such a
translation cover a \emph{keyhole} for~$G$ (since a key can be turned
fully in a keyhole, it can certainly be placed in every possible
orientation).

A classical keyhole or translation cover problem is the Kakeya needle
problem.  It asks for a minimum area region in the plane, a so-called
\emph{Kakeya set}, in which a needle of length~$1$ can be rotated
through $360^{\circ}$ continuously, and return to its initial
position. (See Figure~\ref{fig:example}.)  This question was first
posed, for convex regions, by Soichi Kakeya in
1917~\cite{kakeya1917}. P\`al~\cite{pal1921} showed that the solution
of Kakeya's problem for convex sets is the equilateral triangle of
height one, having area $1/\sqrt 3$.  With our terminology, he
characterized the smallest-area keyhole for a line segment.
\begin{figure}[hbp]
  \centerline{\includegraphics[width=.25\textwidth]{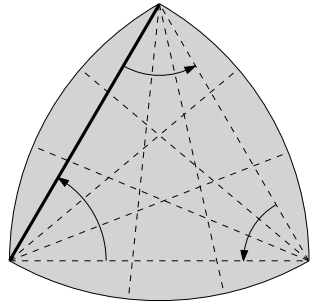}}
  \caption{Within a Kakeya set (shaded), a needle can be
	rotated through $360^{\circ}$.}
  \label{fig:example}
\end{figure}

For the general case, when the Kakeya set is not necessarily convex or
even simply connected, the answer was thought to be a deltoid with
area $\pi/8$. However, Besicovitch gave the surprising answer that one
could rotate a needle using an arbitrary small
area~\cite{besicovitch1920,besicovitch1928}.

\iftrue
Besicovitch's solution builds upon two basic
observations~\cite{tao01}. The first observation is that one can
translate any needle to any location using arbitrarily small area. The
idea is to slide the needle, rotate it, slide it back and then rotate it
back, as illustrated in Fig.~\ref{fig:besc_translation}(a).
\begin{figure}[hbp]
  \centerline{\includegraphics[width=.9\textwidth]{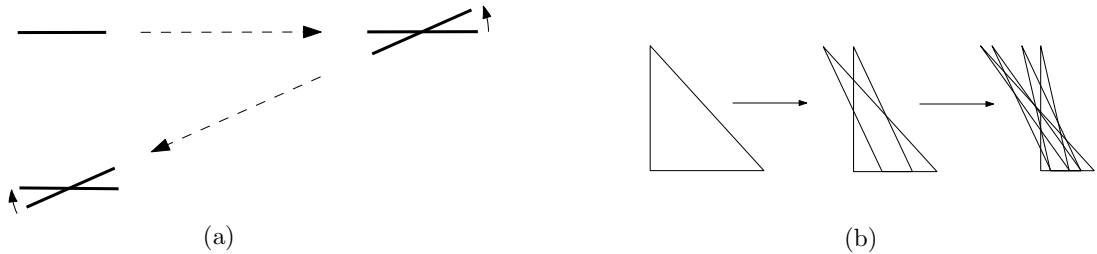}}
  \caption{(a) A needle can be translated to any location using
    arbitrarily small area. (b) There is an open subset of the plane
    of arbitrary small area which contain a unit line segment in every
    direction.}
  \label{fig:besc_translation}
\end{figure}
The area can be made arbitrarily small by sliding the needle over a
large distance. The second observation is that one can construct an
open subset of the plane of arbitrary small area, which contains a
unit line segment in every direction, as illustrated in
Fig.~\ref{fig:besc_translation}(b).  The original construction by
Besicovitch~\cite{besicovitch1920,besicovitch1928} has been simplified
by Perron~\cite{Perron1928}, Rademacher~\cite{Rademacher1962},
Schoenberg~\cite{Schoenberg1962a,Schoenberg1962b},
Besicovitch~\cite{besicovitch1963,besicovitch1964} and
Fisher~\cite{Fisher1973}.
\fi

Bezdek and Connelly~\cite{BezdekConnelly} surveyed results on
minimum-perimeter and minimum-area translation covers.  For the family
of closed curves of length at most one, they proved that
smallest-perimeter translation covers are exactly the convex sets of
constant width~$1/2$.  The corresponding problem for minimizing the
area, known as Wetzel's problem, is still open, with upper and lower
bounds known~\cite{BezdekConnelly,Wetzel1973}.  For the family of sets
of diameter at most one, Bezdek and
Connelly~\cite{Bezdek-connelly-1998} proved that the unique
minimum-perimeter translation cover is the circle of
radius~$1/\sqrt{3}$.  More precisely, they proved that this circle is
the unique smallest-perimeter keyhole for the equilateral triangle of
side length one.  By Jung's theorem~\cite{Jung1901}, this circle
contains any set of diameter one, and so the translation cover result
follows.

Recently, Kakeya-type problems have received considerable attention
due to their many applications. There are strong connections between
Kakeya-type problems and problems in number
theory~\cite{Bourgain2000}, geometric combinatorics~\cite{Wolff1999},
arithmetic combinatorics~\cite{Laba2008}, oscillatory integrals, and
the analysis of dispersive and wave equations~\cite{tao01}.

In this paper, we first generalize P\'al's result~\cite{pal1921} in
the following way: For \emph{any} family $\FAM$ of line segments in
the plane, there is a triangle that is a minimum-area translation
cover for~$\FAM$.
\begin{theorem}
  \label{thm:triangle}
  Let $\FAM$ be a set of line segments in the plane, and let $P$ be a
  convex translation cover for $\FAM$.  Then there is a translation
  cover $T$ for $\FAM$ which is a triangle, and such that the area
  of~$T$ is less than or equal to the area of~$P$.
\end{theorem}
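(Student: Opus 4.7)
The first step is to reformulate the translation-cover condition via the longest-chord function. For a convex body $Q\subset\R^2$ and a direction $\theta$, let $c_Q(\theta)$ denote the length of the longest chord of $Q$ parallel to $(\cos\theta,\sin\theta)$, and for the family $\FAM$ set $\ell(\theta):=\sup\{|s|:s\in\FAM\text{ is parallel to }\theta\}$. A short argument shows that a convex $Q$ is a translation cover for $\FAM$ if and only if $c_Q(\theta)\geq\ell(\theta)$ for every $\theta$. In particular $c_P\geq\ell$ pointwise, so it suffices to construct a triangle $T$ with $c_T\geq c_P$ pointwise and $|T|\leq|P|$: such a $T$ will automatically be a translation cover for $\FAM$.

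By approximating $P$ in the Hausdorff metric with circumscribed convex polygons and then taking limits, I would reduce to the case where $P$ is itself a convex polygon, and then argue by induction on the number $k$ of edges. The base case $k=3$ is immediate. For $k\geq 4$, the goal is to exhibit a convex polygon $P'$ with at most $k-1$ edges that still satisfies $c_{P'}\geq c_P$ and has $|P'|\leq|P|$; iterating collapses $P$ down to a triangle. A natural family of candidates is obtained by deforming $P$ within its combinatorial class, parametrising nearby convex $k$-gons by the perpendicular displacements of the $k$ edges. This yields a $k$-dimensional deformation space within which both the area and the chord function vary smoothly, and I would search for a direction of deformation that strictly decreases area while preserving the chord-domination $c_{P'}\geq c_P$ until an edge collapses.

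The main obstacle is to prove that such an area-decreasing, chord-preserving deformation always exists when $k\geq 4$. The chord function $c_P$ of a polygon is piecewise elementary, attained in each direction by a specific vertex-to-vertex or vertex-to-edge chord; consequently only finitely many combinatorial witnesses occur, and at any given moment only finitely many constraints on the edge displacements are active. If no feasible area-decreasing deformation existed, then by Farkas' lemma the gradient of area would lie in the conical hull of the active constraint gradients, and one would hope to show that this rigidity can only occur when $k=3$. A possible alternative plan, closer in spirit to P\'al's original argument~\cite{pal1921}, is to work directly with the minimum-area triangle $T^*$ satisfying $c_{T^*}\geq c_P$ (whose existence follows by compactness, since a minimiser cannot be arbitrarily large or arbitrarily degenerate relative to $P$), and to extract from the first-order optimality conditions at $T^*$ three chords of $P$ that witness its three sides; the bound $|T^*|\leq|P|$ would then follow from a direct geometric comparison between $T^*$ and the subregion of $P$ carved out by these three chords. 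In either approach the delicate step is relating the two-dimensional geometry of $P$ to the rigid triangular structure of $T$ in a quantitatively tight way.
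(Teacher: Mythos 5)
Your opening reduction is correct: a convex body $Q$ contains a translate of a segment $s$ of direction $\theta$ exactly when its longest chord in direction $\theta$ has length at least $|s|$, so being a translation cover is equivalent to the pointwise chord bound $c_Q\geq\ell$, and it would indeed suffice to produce a triangle $T$ with $c_T\geq c_P$ and $|T|\leq|P|$. This matches the role played by Lemma~\ref{lem:seg} in the paper (there phrased via width functions: $w_s\leq w_P$ forces a maximal chord of $P$ parallel to $s$ to be at least as long as $s$). But this is the easy half of the theorem, and everything after it in your proposal is a plan rather than a proof.

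The genuine gap is the existence of the triangle, i.e.\ the analogue of Theorem~\ref{thm:min-width}, and your own text concedes it (``one would hope to show that this rigidity can only occur when $k=3$''). Concretely, the Farkas-lemma step does not get off the ground: the deformation starts at $P'=P$, where the constraint $c_{P'}(\theta)\geq c_P(\theta)$ is tight for \emph{every} $\theta\in[0,\pi)$ --- a continuum of active constraints, not the finite set your argument needs; the finitely many ``combinatorial witnesses'' of the chord function do not convert this into finitely many active linear inequalities on the edge displacements. Nor is it established that a convex $k$-gon with $k\geq 4$ can always be collapsed to a $(k-1)$-gon while preserving chord domination and not increasing area; no such one-edge-at-a-time monotone path is exhibited, and this is precisely where the difficulty of the theorem lives. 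The paper takes a different route: it proves the stronger width-domination statement $w_T\geq w_P$ by invoking the Ohmann--Chakerian theorem (Fact~\ref{thm:chakerian}), which supplies a trigonal disk $D$ with $w_D=w_P$ and $|D|\leq|P|$, and then straightens the non-straight arcs of $D$ one at a time by an explicit construction that keeps the width function dominating and does not increase the area. Your alternative plan (first-order optimality at a minimum-area triangle $T^*$ with $c_{T^*}\geq c_P$) likewise defers the ``direct geometric comparison'' yielding $|T^*|\leq|P|$, which is the entire content of the theorem. As it stands the proposal proves only the reduction, not the result.
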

With this characterization in hand, we can efficiently compute a
smallest area translation cover for a given family of~$n$ line
segments.  Our algorithm runs in time~$O(n \log n)$, which we prove to
be optimal in the algebraic computation tree model.  It is based on the
problem of finding a smallest-area affine-regular hexagon containing a
given centrally symmetric polygon, a problem that is interesting in
its own right.  As far as we know, except for some trivial cases such
as $n$ disks or $n$ axis-aligned squares, previously known algorithms
for finding smallest-area translation covers have a running time
exponential in $n$, the number of input 
objects~\cite{ahncheong2011,vigneron10}.

As observed above, minimizing the \emph{perimeter} of a translation
cover is much easier.  Let $\FAM$ be a family of centrally symmetric
convex figures.  We prove that if we translate each figure such that
its center of symmetry is the origin, then the convex hull of their
union is a smallest-perimeter translation cover for~$\FAM$.

This immediately implies that a circle with diameter~$1$ is a
smallest-perimeter keyhole for the unit-length segment.  For
figures~$G$ that are not centrally symmetric, this argument no longer
works.  We generalize the result by Bezdek and
Connelly~\cite{Bezdek-connelly-1998} mentioned above and prove the
following theorem (Bezdek and Connelly's result is the special case
where $G$~is an equilateral triangle):
\begin{theorem}
  \label{thm:keyholes}
  Let $G$ be a compact convex set in the plane, and let $\mathcal G$
  be the family of all the rotated copies of $G$ by angles in
  $[0,2\pi)$.  Then the smallest enclosing disk of $G$ is a
    smallest-perimeter translation cover for~$\mathcal G$.
\end{theorem}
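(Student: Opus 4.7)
I would prove Theorem~\ref{thm:keyholes} in two parts. The easy direction is that the smallest enclosing disk $D$ of $G$, being rotation-invariant about its center, contains every rotated copy of $G$ once that copy is translated so its rotation axis coincides with the center of $D$; hence $D$ is a translation cover for $\mathcal G$ of perimeter $2\pi r$, where $r$ is its radius.

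For the lower bound, my plan is to use Minkowski averaging over all rotations. Let $K$ be an arbitrary convex translation cover for $\mathcal G$, and let $R_\alpha$ denote rotation by angle $\alpha$ about the origin. For each $\alpha\in[0,2\pi)$, choose a translation vector $u_\alpha$ with $R_\alpha(K)+u_\alpha\supseteq G$; such a vector exists because $K$ contains some translate of $R_{-\alpha}(G)$, and rotating that inclusion by $\alpha$ rearranges it into the stated form. Form the Minkowski mean
\[
  \hat K \;=\; \frac{1}{2\pi}\int_0^{2\pi}\bigl(R_\alpha(K)+u_\alpha\bigr)\,d\alpha,
\]
defined via support functions by $h_{\hat K}(\theta)=\frac{1}{2\pi}\int_0^{2\pi}h_{R_\alpha(K)+u_\alpha}(\theta)\,d\alpha$.

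Three properties of $\hat K$ then yield the bound. First, the translate-free average $\tilde K:=\frac{1}{2\pi}\int_0^{2\pi}R_\alpha(K)\,d\alpha$ is invariant under every rotation of the plane (by a change of variable in the integral), so $\tilde K$ is a disk; since $\hat K$ differs from $\tilde K$ only by the fixed translation $\frac{1}{2\pi}\int u_\alpha\,d\alpha$, $\hat K$ is also a disk. Second, each summand contains $G$, so $h_{R_\alpha(K)+u_\alpha}\geq h_G$ pointwise, and averaging preserves this inequality; hence $h_{\hat K}\geq h_G$, i.e.\ $\hat K\supseteq G$, so its radius is at least $r$. Third, perimeter is additive under Minkowski sums and invariant under rigid motions, so
\[
  \mathrm{per}(\hat K) \;=\; \frac{1}{2\pi}\int_0^{2\pi}\mathrm{per}\bigl(R_\alpha(K)+u_\alpha\bigr)\,d\alpha \;=\; \mathrm{per}(K).
\]
Combining, $\mathrm{per}(K)=\mathrm{per}(\hat K)\geq 2\pi r$, as required.

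The main technical subtlety I anticipate is ensuring that the selection $\alpha\mapsto u_\alpha$ is measurable so that the integrals above make sense. This can be handled either by a standard measurable-selection theorem applied to the set-valued map $\alpha\mapsto\{u:R_\alpha(K)+u\supseteq G\}$ (whose values vary continuously in Hausdorff distance), or, more elementarily, by replacing the integral by the Minkowski mean over $n$ equally spaced rotations and letting $n\to\infty$; the limiting set is still a disk containing $G$, and the equality of perimeters persists by continuity. Beyond this point everything reduces to routine manipulation of support functions.
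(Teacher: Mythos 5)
Your proof is correct, but it takes a genuinely different route from the paper. The paper first reduces the general case to two special cases according to how the smallest enclosing disk touches $G$ (two diametrically opposite points, handled by Theorem~\ref{thm:segment-perimeter}, or three points forming an acute triangle), and then, for the acute triangle, uses only the \emph{three} rotations $\delta_1,\delta_2,\delta_3$ together with the convex-combination identity $0=\sum_i\alpha_i v_i$ to make the translation terms $\dop{c(\theta)}{\cdot}$ cancel inside Cauchy's formula. You instead kill the translation terms by averaging over \emph{all} rotations: the Minkowski mean $\hat K$ of the rotated, translated copies of $K$ is a disk containing $G$ with $\mathrm{per}(\hat K)=\mathrm{per}(K)$, which gives the bound in one stroke, with no case analysis and no appeal to the centrally-symmetric result. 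Each approach has its advantages. Yours is more uniform and arguably cleaner; its only cost is the measurable-selection issue you correctly flag (your fix via finite equally spaced rotations needs a Blaschke-selection limit to recover a genuine disk, but that is routine; alternatively one can take $u_\alpha$ to be, say, the Steiner point of the convex set $\bigcap_{g\in G}(g-R_\alpha(K))$, which varies measurably). The paper's discrete three-direction argument, on the other hand, pays off in the subsequent uniqueness theorem: there the authors analyze exactly when the pointwise inequality~(\ref{eq:lower-bound}) is tight, which is immediate in their setup but would be less direct to extract from the continuous symmetrization, where the perimeter identity holds with equality for every $K$ and all the slack is hidden in the containment $\hat K\supseteq G$.
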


\section{Preliminaries}

An \emph{oval} is a compact convex figure in the plane. For an
oval~$P$, let $w_{P}: [0,\pi] \rightarrow \R$ denote the width
function of~$P$.  The value $w_{P}(\Th)$ is the length of the
projection of $P$ on a line with slope~$\Th$ (that is, a line that
makes angle~$\Th$ with the $x$-axis).  Let $|P|$ denote the area
of~$P$.

For two ovals~$P$ and~$Q$, we write $w_{P} \geq w_Q$ or $w_Q \leq w_P$
to mean pointwise domination, that is for every $\Th \in [0,\pi)$ we 
have $w_{P}(\Th) \geq w_{Q}(\Th)$.  We also write $w_P = w_Q$ if and
only if both $w_P \leq w_Q$ and $w_Q \leq w_P$ hold.

\begin{figure}[tbh]
  \center
  \includegraphics[width=.7\textwidth]{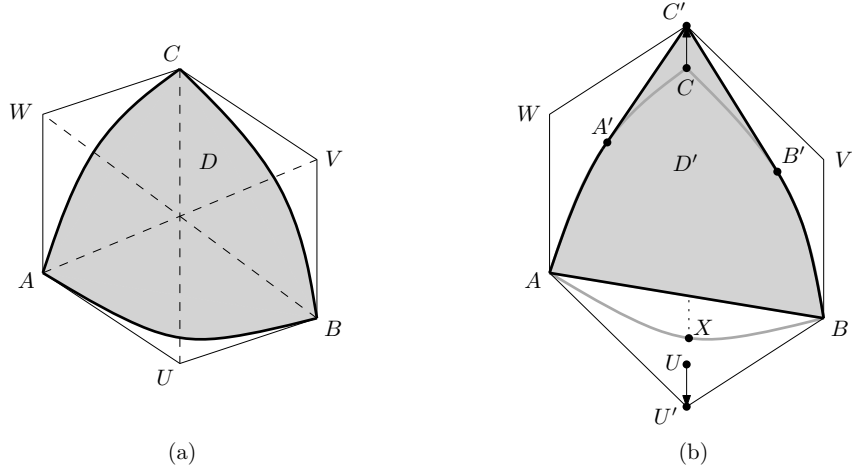}
  \caption{(a) A trigonal disk $D$ that is contained in the centrally
    symmetric hexagon $AUBVCW$ and contains the triangle $ABC$. (b)
    The hexagon $AU'BVC'W$ is centrally symmetric and contains
    $D'$. Since $D'$ contains the triangle $ABC'$, it is also a
    trigonal disk.}
  \label{fig:trigonal}
\end{figure}

The \emph{Minkowski symmetrization} of an oval~$P$ is the oval~$\Pb =
\frac 12 (P - P) = \{\frac 12 (x - y) \mid x, y \in P\}$.  It is well
known and easy to show that~$\Pb$ is centrally symmetric around the
origin, and that~$w_{\Pb} = w_P$.

An oval $D$ is a \emph{trigonal disk} if there is a centrally
symmetric hexagon $\mathit{AUBVCW}$ such that $D$ contains the
triangle $ABC$ and is contained in the hexagon~$\mathit{AUBVCW}$, as
illustrated in \figurename~\ref{fig:trigonal}(a).  Trigonal disks were
called ``relative Reuleaux triangles'' by Ohmann~\cite{ohmann1952} and
Chakerian~\cite{chakerian1966}, the term ``trigonal disk'' being due
to Fejes T\'oth~\cite{fejestoth1983} who used it in the context of
packings by convex disks. A trigonal disk has three ``main'' vertices
and three arcs connecting these main vertices. For example, the
trigonal disk $D$ in \figurename~\ref{fig:trigonal}(a) consists of
three vertices $A, B$, and $C$, and three arcs connecting them.

Ohmann~\cite{ohmann1952} and Chakerian~\cite{chakerian1966} studied
sets with a given fixed width function, and obtained the following result
(see for instance Theorem~$3'$ in~\cite{chakerian1966} for a proof):
\begin{fact}
  \label{thm:chakerian}
  Given an oval~$P$, there is a trigonal disk~$D$ with $|D| \leq |P|$
  such that $w_D = w_P$.
\end{fact}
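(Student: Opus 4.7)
My plan is to establish Fact~\ref{thm:chakerian} via a compactness-plus-variational argument. Let $\mathcal{F}$ denote the family of all ovals $K$ (positioned, say, inside some fixed large square) with $w_K = w_P$. This family is nonempty since it contains a translate of $P$, uniformly bounded, and closed under Hausdorff limits, so it is compact. Since the area functional is continuous in the Hausdorff metric on such a uniformly bounded family, there exists $D \in \mathcal{F}$ minimizing $|D|$, and in particular $|D| \leq |P|$.

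The substantive task is to show that any such minimizer $D$ is a trigonal disk. The approach is to examine the supporting lines of $D$: for each $\theta \in [0,\pi)$, $D$ lies in a strip of width $w_P(\theta)$ bounded by two parallel supporting lines $\ell^+(\theta), \ell^-(\theta)$. Among all triples of directions $\theta_1,\theta_2,\theta_3$, I would choose one for which the six supporting lines form a hexagon $H$ that is centrally symmetric; the existence of such a triple reduces to a continuous intermediate-value argument over the choice of $(\theta_1,\theta_2,\theta_3)$. At minimality, $D$ must touch each of the six sides of $H$, yielding three contact ``arcs'' and three contact ``vertices'' $A, B, C$ alternating on the boundary of $D$: for if some side of $H$ were touched along a large region while the antipodal side touched only at a single point, one could perturb $D$ so as to shrink its area while preserving the width function, contradicting minimality. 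Identifying the three contact vertices of $D$ with $A,B,C$ and checking that the triangle $ABC$ lies inside $D$ while $D$ lies inside $H = AUBVCW$ gives exactly the trigonal disk structure.

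The main obstacle is the variational justification that the contact pattern must be of this three-vertex, three-arc form rather than some more complicated configuration (e.g., four or more vertices, or disconnected contact arcs). This part requires a careful local-perturbation argument in which area is ``shuffled'' between antipodal boundary pieces in a width-preserving way, coupled with the observation that, in the plane, three pairs of parallel supporting lines already suffice to enclose any oval in a circumscribed centrally symmetric hexagon. The original arguments of Ohmann~\cite{ohmann1952} and Chakerian~\cite{chakerian1966}, already cited, supply the detailed perturbation analysis and the final structural identification.
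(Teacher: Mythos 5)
This statement is quoted by the paper as a known result (Fact~\ref{thm:chakerian}), with no proof given beyond the pointer to Theorem~$3'$ of Chakerian~\cite{chakerian1966}; so there is no internal proof to compare against, only that citation. Your compactness preamble is fine: the family of ovals in a fixed box with width function exactly $w_P$ is nonempty, closed and uniformly bounded in the Hausdorff metric, the width function and the area are continuous there, so an area-minimizer $D$ with $|D|\leq|P|$ exists. That step is correct but also the easy part.

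The genuine gap is the structural claim that the minimizer is a trigonal disk, and your sketch does not close it. Note what the definition actually requires: a centrally symmetric hexagon $AUBVCW$ circumscribed about $D$ such that the three \emph{alternate vertices} $A,B,C$ of the hexagon lie in $D$ (whence $D\supset ABC$ by convexity). The classical intermediate-value argument gives you a circumscribed centrally symmetric hexagon all six of whose sides touch $D$, but that is far weaker: a disk admits such hexagons and is \emph{not} a trigonal disk (if $A,B,C$ were in the disk, central symmetry forces $U=-C$, $V=-A$, $W=-B$ into the disk too, so the hexagon could not contain it). So the whole content of the Fact is in showing that, for an area-minimizer with prescribed width function, the contact points can be pushed to three alternate hexagon vertices --- and your proposal explicitly defers exactly this ``perturbation analysis and final structural identification'' to Ohmann and Chakerian. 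As written, the proposal therefore adds only the existence-of-a-minimizer step to what the paper already does by citation; the width-preserving, area-nonincreasing perturbation that forces the three-vertex/three-arc structure (the analogue of the $D\mapsto D'$ surgery the paper performs in the proof of Theorem~\ref{thm:min-width}) is asserted, not supplied, and without it the argument does not establish the Fact.
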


\section{Minimum area for a family of segments}

In this section we will prove Theorem~\ref{thm:triangle}. The proof
contains two parts. First we prove that for every oval $P$ there
exists a triangle $T$ with $|T| \leq |P|$ and $w_T \geq w_P$
(Theorem~\ref{thm:min-width}). The second part is to prove that for an
oval $P$ and a closed segment $s$, if $w_s\leq w_P$ then $P$ contains
a translated copy of~$s$ (Lemma~\ref{lem:seg}).
\begin{theorem}
  \label{thm:min-width}
  Given an oval~$P$, there exists a triangle~$T$ with $|T| \leq |P|$
  and $w_T \geq w_P$.
\end{theorem}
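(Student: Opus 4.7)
My plan is to first apply Fact~\ref{thm:chakerian} (Ohmann--Chakerian) to reduce to the case where $P$ is itself a trigonal disk: the fact produces a trigonal disk $D$ with $|D|\leq|P|$ and $w_D=w_P$, so any triangle $T$ with $|T|\leq|D|$ and $w_T\geq w_D$ already proves the theorem for $P$. Let $H=AUBVCW$ be the centrally symmetric hexagon from the trigonal disk definition and let $ABC$ be its inscribed triangle.

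The key tool is the correspondence between triangles and centrally symmetric hexagons via Minkowski symmetrization: every centrally symmetric hexagon $H'$ arises as $\Tb'=\tfrac12(T'-T')$ for a unique triangle $T'$ (up to translation), whose sides are parallel to the three pair-directions of $H'$. The identity $|T'-T'|=6|T'|$ (checked by a short coordinate computation on an arbitrary triangle) yields $|T'|=\tfrac23|H'|$, while $w_{T'}=w_{\Tb'}=w_{H'}$ by the Preliminaries. This reduces the problem to finding a centrally symmetric hexagon $H'\supseteq\Db$ with $|H'|\leq\tfrac32|D|$: setting $T$ to be the triangle with $\Tb=H'$, one then gets $|T|=\tfrac23|H'|\leq|D|$, and since $\Tb=H'\supseteq\Db$ with both centrally symmetric, inclusion is equivalent to pointwise width domination, so $w_T=w_{H'}\geq w_{\Db}=w_D$.

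The obvious candidate is $H'=H$ itself, which contains $\Db$ after translating so that centres coincide (because $D\subseteq H$ gives $w_D\leq w_H$, which between centrally symmetric bodies forces containment). Decomposing $H$ into $ABC$ together with three ``ear'' triangles $AUB$, $BVC$, $CWA$, and using the relations $U=-C$, $V=-A$, $W=-B$ (with the centre of $H$ placed at the origin), a direct cross-product expansion gives $2|AUB|+2|BVC|+2|CWA|=3(A\times B+B\times C+C\times A)-2(A\times B+B\times C+C\times A)=2|ABC|$, so the three ears contribute $|ABC|$ in total and hence $|H|=2|ABC|$. Consequently $|T|=\tfrac43|ABC|$, and the construction succeeds as soon as $|D|\geq\tfrac43|ABC|$.

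The main obstacle is the regime $|D|<\tfrac43|ABC|$, in which $D$ lies close to $ABC$ (in the degenerate case $D=ABC$ one simply takes $T=ABC$, which corresponds to $H'=\overline{ABC}$ of area exactly $\tfrac32|ABC|=\tfrac32|D|$). Here I would replace $H$ by a smaller centrally symmetric hexagon $H'$ interpolating between $\overline{ABC}$ and $H$, chosen as the minimum-area such hexagon still containing $\Db$; the required bound $|H'|\leq\tfrac32|D|$ should reflect a matching between the area $H'$ gains over $\overline{ABC}$ and the area $D$ gains over $ABC$. Verifying this extremal-area inequality---presumably via a Brunn--Minkowski-type convexity argument along the interpolating family---is the technical heart of the proof.
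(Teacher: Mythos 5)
There is a genuine gap, and it comes in two parts. First, your key structural claim --- that \emph{every} centrally symmetric hexagon is the Minkowski symmetrization $\Tb'$ of some triangle $T'$ --- is false. A centrally symmetric hexagon centered at the origin with consecutive vertices $A,U,B,-A,-U,-B$ is the symmetrization of a triangle exactly when it is \emph{affine-regular}, i.e.\ when $U=A+B$ (this is the content of Lemma~\ref{lem:aff_hexagon}); generic centrally symmetric hexagons form a $6$-parameter family while affine-regular ones form a $4$-parameter family. In particular the hexagon $H=\mathit{AUBVCW}$ coming from the trigonal-disk definition is in general \emph{not} affine-regular, so your ``obvious candidate $H'=H$'' does not correspond to any triangle $T$ with $\Tb=H$, and the conclusion $|T|=\tfrac43|ABC|$ in that case does not follow. (Your computations $|H|=2|ABC|$ and $|T'|=\tfrac23|\Tb'|$ are fine; it is the existence of $T'$ that fails.)

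Second, and more importantly, the case you defer --- producing an affine-regular hexagon $H'\supseteq\Db$ with $|H'|\leq\tfrac32|D|$ --- is not a residual technicality: by Lemma~\ref{lem:width-is-inclusion} and Lemma~\ref{lem:aff_hexagon} (i.e.\ the equivalence recorded as Lemma~\ref{lem:p_smallest_hexagon}), that statement is \emph{equivalent} to the theorem itself, so the ``technical heart'' you leave to a hoped-for Brunn--Minkowski interpolation is the entire content of the result. The paper proves the theorem quite differently and without hexagon area bounds: among all trigonal disks $D$ with $w_D=w_P$ and $|D|\leq|P|$ (nonempty by Fact~\ref{thm:chakerian}), it picks one with the maximum number of straight arcs, and shows that if some arc $AB$ is curved one can replace $D$ by the hull $D'$ of its part above $AB$ together with a lifted apex $C'$; a direct geometric estimate shows $|D'|\leq|D|$ and $w_{D'}\geq w_D$, so $D'$ has one more straight arc, contradicting maximality. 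Hence the extremal trigonal disk is already a triangle. If you want to complete your route, you must actually prove the extremal hexagon inequality for trigonal disks; as it stands your argument covers only the regime $|D|\geq\tfrac43|ABC|$, and even there only modulo the affine-regularity issue above.
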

\begin{proof}
  Let $\Ds$ be the set of trigonal disks $D$ such that we have $|D|
  \leq |P|$ and $w_D = w_P$.  The set $\Ds$ is nonempty by
  Fact~\ref{thm:chakerian}.  Consider three arcs connecting the main
  vertices of a trigonal disk in $\Ds$. Each arc can be straight, or
  not. We choose a trigonal disk $D \in \Ds$ with a maximum
  number of straight arcs.  We show that~$D$ is a
  triangle.

  Let $\mathit{AUBVCW}$ be the hexagon from the definition of the
  trigonal disk~$D$, and assume for a contradiction that~$D$ is not a
  triangle, that is, there is at least one non-straight arc among the
  three arcs connecting~$A, B$, and~$C$. See
  \figurename~\ref{fig:trigonal}(a). Without loss of generality, we
  assume that the arc connecting~$A$ and~$B$ is not straight.

  Let the sides $AW$ and $BV$ be vertical, with $C$ above the
  line~$AB$.  Let $X$ be the point of~$D$ below~$AB$ with the largest
  \emph{vertical} distance~$d$ from the line~$AB$.  Let $C'$ be the
  point vertically above~$C$ at distance~$d$ from~$C$.  Let $D'$ be
  the convex hull of the part of~$D$ above the line $AB$ and the
  point~$C'$.  It is not difficult to see that $D'$ is also a trigonal
  disk: Let $U'$ be the point vertically below~$U$ at distance~$d$
  from~$U$.  Then the hexagon $AU'BVC'W$ is centrally symmetric and
  contains $D'$. Clearly $D'$ contains the triangle $ABC'$. See
  \figurename~\ref{fig:trigonal}(b).

  We show next that~$|D'| \leq |D|$.  The area of $D'\setminus D$ is
  bounded by the area of the two triangles~$A'C'C$ and~$B'C'C$,
  where~$A'$ and~$B'$ are points on~$D$ such that $A'C'$ and~$B'C'$
  are tangent to~$D$.  This area is equal to $d/2$~times the
  horizontal distance between~$A'$ and~$B'$.  But the horizontal
  distance between $A'$ and~$B'$ is at most the horizontal distance
  between~$A$ and~$B$, so the area of~$D'\setminus D$ is bounded by
  the area of the triangle~$AXB$, and we have~$|D'| \leq |D|$.

  We also need to argue that $w_{D'} \geq w_D$.  Consider a minimal
  strip containing~$D$. If this strip does not touch~$D$ from below
  between $A$ and $B$, then the corresponding strip for~$D'$ is at
  least as wide. Otherwise, it touches~$D$ from below in a point~$Y$
  between~$A$ and~$B$, and touches from above in~$C$, as~$C$ is the
  only antipodal point of~$D$ for~$Y$.  A strip with the same
  direction will be determined either by $A$ and~$C'$, or by $B$ and
  $C'$, and in both cases its width is not less than the width of the
  original strip.

  Since $w_{D'} \geq w_D \geq w_P$ and $|D'|\leq |D|\leq|P|$ the
  trigonal disk $D'$ must be a member of $\Ds$.  However, $D'$ has at
  least one straight arc more than~$D$, contradicting our choice
  of~$D$.  It follows that our assumption that~$D$ is not a triangle
  must be false.
\end{proof}

This finishes the first part. We need the following lemma, which shows
that whether or not an oval~$P$ contains a translated copy of a given
segment~$s$ can be determined by looking at the width functions of~$P$
and~$s$ alone:
\begin{lemma}
  \label{lem:seg}
  Let $s$ be a segment in the plane, and let $P$ be an oval such that
  $w_s \leq w_P$.  Then $P$ contains a translated copy of~$s$.
\end{lemma}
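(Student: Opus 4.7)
The plan is to transport the containment question to the Minkowski symmetrizations $\bar{s}$ and $\bar{P}$, where both sets are centrally symmetric about the origin, and then exploit the fact that for centrally symmetric convex bodies the support function is completely determined by the width function.

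The first step is to prove the equivalence: $P$ contains a translated copy of $s$ if and only if $\bar{s} \subseteq \bar{P}$. Write $s = \{tv : 0 \leq t \leq \ell\}$, where $v$ is a unit vector in the direction of $s$ and $\ell$ is its length. Since $P$ and $s$ are convex, a translate $s + x$ lies in $P$ iff its two endpoints $x$ and $x + \ell v$ do, iff $\ell v \in P - P = 2\bar{P}$, iff $(\ell/2)v \in \bar{P}$. On the other hand $\bar{s} = \frac{1}{2}(s-s) = \{tv : -\ell/2 \leq t \leq \ell/2\}$, and because $\bar{P}$ is centrally symmetric and convex, having $(\ell/2)v \in \bar{P}$ forces $-(\ell/2)v \in \bar{P}$ by symmetry and then the whole segment by convexity, so $(\ell/2)v \in \bar{P}$ is equivalent to $\bar{s} \subseteq \bar{P}$.

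The second step derives $\bar{s} \subseteq \bar{P}$ from the hypothesis $w_s \leq w_P$. The identity $w_{\bar{A}} = w_A$ from the preliminaries yields $w_{\bar{s}} \leq w_{\bar{P}}$. Because $\bar{s}$ and $\bar{P}$ are centrally symmetric about the origin, their support functions satisfy $h_{\bar{s}}(u_\theta) = \tfrac{1}{2}w_{\bar{s}}(\theta)$ and $h_{\bar{P}}(u_\theta) = \tfrac{1}{2}w_{\bar{P}}(\theta)$ for the unit vector $u_\theta$ at angle $\theta$. Hence $h_{\bar{s}} \leq h_{\bar{P}}$ on the whole unit circle, and the standard support-function criterion for convex bodies (namely, $A \subseteq B$ iff $h_A \leq h_B$, proved by intersecting the half-spaces containing $B$) gives $\bar{s} \subseteq \bar{P}$. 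Combining this with the first step yields the lemma.

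The argument is essentially a routine translation between the width and containment languages via Minkowski symmetrization; the one step that deserves some care is the equivalence in the first part, since translation-containment of a segment is a condition on two endpoints while $\bar{s} \subseteq \bar{P}$ is ultimately tested at a single point of $\bar{P}$, and that reduction depends crucially on both the central symmetry of $\bar{P}$ and the convexity of $P$ (used to reduce segment containment to its endpoints).
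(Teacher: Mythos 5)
Your proof is correct, but it takes a genuinely different route from the paper's. The paper argues directly and very concretely: take a longest chord $pq$ of $P$ parallel to $s$, use the standard fact that $P$ admits parallel supporting lines through the endpoints of such a maximal chord, and compare the width of $P$ in the direction normal to those lines with the corresponding width of $s$ to conclude $|pq|\geq |s|$. You instead pass to the Minkowski symmetrizations: you show that $P$ contains a translate of $s$ if and only if $\bar{s}\subseteq\bar{P}$ (via the clean observation that a translate of $s$ fits in $P$ iff $\ell v\in P-P=2\bar{P}$, with convexity reducing containment of a segment to containment of its endpoints and central symmetry plus convexity reducing $\bar{s}\subseteq\bar{P}$ to a single point), and then deduce $\bar{s}\subseteq\bar{P}$ from $w_{\bar{s}}\leq w_{\bar{P}}$ using the support-function characterization of inclusion. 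That second step is precisely the paper's Lemma~\ref{lem:width-is-inclusion} specialized to $\bar s$ and $\bar P$ (the paper proves it by a separating-line argument rather than via $h_K(u_\theta)=\tfrac12 w_K(\theta)$, but the content is the same), so your argument in effect derives Lemma~\ref{lem:seg} as a corollary of machinery the paper develops anyway for its hexagon reduction. What your route buys is a slightly stronger and more structural statement -- the full equivalence between ``$P$ contains a translate of $s$'' and ``$w_s\leq w_P$'' via symmetrization -- at the cost of invoking support functions; the paper's proof is shorter and entirely elementary but yields only the one implication it needs. All the individual steps you use (endpoint reduction, $P-P=2\bar P$, $w_{\bar A}=w_A$, $h\leq h$ implies inclusion) are sound, so there is no gap.
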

\begin{proof}
  Without loss of generality, let $s$ be a horizontal segment.  Let
  $pq$ be a horizontal segment of maximal length contained
  in~$P$. Then $P$ has a pair of parallel tangents $\ell_1$
  and~$\ell_2$ through~$p$ and~$q$.  By the assumption, the distance
  between $\ell_1$ and~$\ell_2$ must be large enough to place~$s$ in
  between the two lines.  But this implies that the segment~$pq$ is at
  least as long as~$s$, and $s$ can be placed on the segment~$pq$
  in~$P$.
\end{proof}

To prove Theorem~\ref{thm:triangle}, let~$P$ be an oval of minimum
area that contains a translated copy of every~$s \in \FAM$.  By
Theorem~\ref{thm:min-width} there is a triangle~$T$ such that $|T|
\leq |P|$ and $w_T \geq w_P$.  Let $s \in \FAM$.  Since there is a
translated copy of~$s$ contained in~$P$, we must have $w_s \leq w_P
\leq w_T$. By Lemma~\ref{lem:seg} there is then a translated copy of
$s$ contained in~$T$.

\section{From triangles to hexagons}

We now turn to the computational problem: Given a family $\FAM$ of
line segments, find a smallest-area convex set that contains a
translated copy of every~$s \in \FAM$.

By Theorem~\ref{thm:triangle} we can choose the answer to be a
triangle. In this section we show that this problem is equivalent to
finding a smallest-area \emph{affine-regular hexagon} enclosing some
centrally symmetric convex figure.  An {affine-regular hexagon} is the
image of a regular hexagon under a non-singular affine transformation.
In this paper, we only consider affine-regular hexagons that are
centrally symmetric about the origin, so by abuse of terminology, we
will write \emph{affine-regular hexagon} for an affine-regular hexagon
that is centrally symmetric about the origin.

In the next section we will then show how to solve that problem, using
the tools of computational geometry.

The basic insight is that for centrally symmetric figures, comparing
width-functions is equivalent to inclusion:
\begin{lemma}
  \label{lem:width-is-inclusion}
  Let~$P$ and~$Q$ be ovals centrally symmetric about the origin.  Then
  $w_P \leq w_Q$ if and only if $P \subset Q$.
\end{lemma}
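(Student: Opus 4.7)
The plan is to reduce the width inequality to a support-function inequality, and then invoke the standard fact that one convex body is contained in another if and only if the support function of the first is dominated by the support function of the second.

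For a direction $\Th \in [0, 2\pi)$, let $u_\Th = (\cos\Th, \sin\Th)$, and let $h_P(\Th) = \max_{x \in P} \langle x, u_\Th \rangle$ denote the support function of~$P$. The width $w_P(\Th)$ is the length of the projection of~$P$ on a line with slope~$\Th$, which equals $h_P(\Th + \pi/2) + h_P(\Th - \pi/2)$. First I would use the hypothesis that $P$ is centrally symmetric about the origin to conclude $h_P(\Th) = h_P(\Th + \pi)$ for every~$\Th$ (if $x$ attains the maximum in direction~$u_\Th$, then $-x \in P$ attains it in direction~$-u_\Th$). The same equation holds for~$Q$. This gives $w_P(\Th) = 2 h_P(\Th + \pi/2)$ and $w_Q(\Th) = 2 h_Q(\Th + \pi/2)$, so the assumption $w_P \leq w_Q$ is equivalent to the pointwise inequality $h_P \leq h_Q$ on all directions.

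Next I would show that $h_P \leq h_Q$ implies $P \subset Q$. Suppose, for a contradiction, that there exists $x \in P \setminus Q$. Since $Q$ is compact and convex, the separating hyperplane theorem provides a unit vector~$u$ such that $\langle x, u \rangle > \max_{y \in Q} \langle y, u \rangle = h_Q(u)$. On the other hand, $x \in P$ gives $h_P(u) \geq \langle x, u \rangle > h_Q(u)$, contradicting $h_P \leq h_Q$. The converse direction is immediate: if $P \subset Q$, then every projection of $P$ is contained in the corresponding projection of~$Q$, so $w_P(\Th) \leq w_Q(\Th)$ for all~$\Th$.

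There is no real obstacle here; the lemma is essentially a repackaging of two classical facts (support functions determine inclusion, and centrally symmetric bodies about the origin have support function equal to half the width). The only point worth being careful about is the factor of~$2$ and the angular shift when expressing $w_P$ via $h_P$, and the use of central symmetry to pair opposite support directions.
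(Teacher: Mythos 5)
Your proof is correct and is essentially the paper's argument: both reduce the nontrivial direction to separating a point of $P\setminus Q$ from $Q$ by a line and then using central symmetry to turn that one-sided separation into a two-sided width violation — you simply route this through the support function (noting $w_P=2h_P$ up to an angular shift) where the paper speaks directly of the strip bounded by $\ell$ and $-\ell$. One immaterial slip: with the paper's definition of $w_P(\Th)$ as the length of the projection onto a line of slope $\Th$, the correct identity is $w_P(\Th)=h_P(\Th)+h_P(\Th+\pi)$ rather than $h_P(\Th+\pi/2)+h_P(\Th-\pi/2)$, but since only the pointwise comparison over all directions is used, the fixed shift is harmless.
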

\begin{proof}
  One direction is trivial, so consider for a contradiction the case
  where $w_P \leq w_Q$ and $P \not\subset Q$.  Then there is a
  point~$p \in P \setminus Q$.  Since $Q$ is convex, there is a line
  $\ell$ that separates $p$ from~$Q$.  Since $P$ and $Q$ are centrally
  symmetric, this means that $Q$ is contained in the strip bounded by
  the lines~$\ell$ and~$-\ell$, while $P$ contains the points~$p$
  and~$-p$ lying outside this strip.  This implies that for the
  orientation~$\Th$ orthogonal to~$\ell$ we have $w_P(\Th) >
  w_Q(\Th)$, a contradiction.
\end{proof}

Recall that $\Pb$ denotes the Minkowski symmetrization of an oval $P$.
\begin{lemma}
  \label{lem:aff_hexagon}
  Let~$T$ be a non-degenerate triangle. Then $\Tb$ is an
  affine-regular hexagon, and $|\Tb| = \frac 32 |T|$.  Every
  affine-regular hexagon $H$ can be expressed in this form.
\end{lemma}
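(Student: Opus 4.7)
The plan is to reduce everything to the equilateral case using the fact that Minkowski symmetrization commutes with linear maps, namely $\overline{\phi(T)} = \phi(\bar T)$ for any non-singular linear $\phi$ (which follows directly from $\frac12(\phi(T) - \phi(T)) = \phi(\frac12(T-T))$, using that $\phi$ is linear). Since any two non-degenerate triangles are images of one another under an invertible affine map, and translation doesn't affect $\bar T$, it suffices to handle a single equilateral reference triangle.

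First I would pick the equilateral triangle $T_0$ with vertices $v_1, v_2, v_3$ centered at the origin and identify the six extreme points of $\bar{T_0}$ as the midpoints $\pm\frac12(v_i - v_j)$ for $i \neq j$; a short check shows these lie at equal distance from the origin and at successive $60^\circ$ angles, so $\bar{T_0}$ is a regular hexagon. A direct area computation for this case gives $|\bar{T_0}| = \frac32 |T_0|$. Then for a general non-degenerate triangle $T$, write $T = \phi(T_0) + t$ for some invertible linear $\phi$ and translation $t$; we have $\bar T = \phi(\bar{T_0})$, which is therefore the affine image of a regular hexagon — hence affine-regular — and both $|T|$ and $|\bar T|$ scale by $|\det \phi|$, preserving the ratio $\frac32$.

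For the converse, given an affine-regular hexagon $H$ (centered at the origin), by definition $H = \psi(H_0)$ for some non-singular linear $\psi$ and some regular hexagon $H_0$ centered at the origin. By the previous paragraph there is an equilateral triangle $T_0$ with $\bar{T_0} = H_0$, and then $T := \psi(T_0)$ satisfies $\bar T = \psi(\bar{T_0}) = \psi(H_0) = H$, as desired.

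I expect no real obstacle here — everything reduces to (i) the explicit coordinate check that $\bar{T_0}$ is a regular hexagon for an equilateral $T_0$, and (ii) the transformation-equivariance of Minkowski symmetrization. The only minor point to be careful about is that when identifying the vertices of $\bar T$ as differences of vertices of $T$, one should briefly justify that these six points are indeed extremal and appear in the correct cyclic order (so that $\bar T$ really is a hexagon and not a degenerate figure); this is immediate from the triangle being non-degenerate, since then no two of the six vectors $\pm(v_i - v_j)$ are parallel to one another for distinct unordered pairs $\{i,j\}$.
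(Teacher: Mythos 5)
Your proposal is correct and follows essentially the same route as the paper: the paper's proof is the one-line observation that every non-degenerate triangle is the affine image of an equilateral triangle, so it suffices to check the claim for the equilateral triangle and the regular hexagon. You have simply spelled out the details (the equivariance $\overline{\phi(T)}=\phi(\bar T)$, the explicit vertex and area computation for the equilateral case, and the converse), all of which are accurate.
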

\begin{proof}
  Since every non-degenerate triangle is the affine image of an
  equilateral triangle, it suffices to observe this relationship for
  the equilateral triangle and the regular hexagon.
\end{proof}

Since $w_P = w_{\Pb}$, $w_T = w_{\Tb}$, and by
Lemmas~\ref{lem:width-is-inclusion} and~\ref{lem:aff_hexagon}, we
immediately have
\begin{lemma}
  \label{lem:p_smallest_hexagon}
  Given an oval~$P$, a triangle $T$ is a smallest-area triangle
  with~$w_T \geq w_P$ if and only if $\Tb$ is a smallest-area affine
  regular hexagon with $\Pb \subset \Tb$.
\end{lemma}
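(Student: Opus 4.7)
The plan is to view Minkowski symmetrization $T \mapsto \bar T$ as a bijection between non-degenerate triangles and affine-regular hexagons that preserves the relevant constraint and rescales area by a constant factor. Lemma~\ref{lem:aff_hexagon} provides exactly this: every non-degenerate triangle $T$ yields an affine-regular hexagon $\bar T$ with $|\bar T| = \tfrac{3}{2}|T|$, and conversely every affine-regular hexagon arises in this way. Hence, minimizing $|T|$ over a class of triangles is equivalent (up to the factor $\tfrac{3}{2}$) to minimizing $|\bar T|$ over the image class of hexagons.

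The next step is to translate the width constraint $w_T \geq w_P$ into the containment constraint $\bar P \subset \bar T$. First, since Minkowski symmetrization preserves widths, we have $w_T = w_{\bar T}$ and $w_P = w_{\bar P}$, so $w_T \geq w_P$ if and only if $w_{\bar T} \geq w_{\bar P}$. Second, both $\bar T$ and $\bar P$ are centrally symmetric about the origin, so Lemma~\ref{lem:width-is-inclusion} applies and gives $w_{\bar T} \geq w_{\bar P}$ if and only if $\bar P \subset \bar T$.

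Combining the two observations, the map $T \mapsto \bar T$ is a bijection from the set of non-degenerate triangles $T$ with $w_T \geq w_P$ onto the set of affine-regular hexagons $H$ with $\bar P \subset H$, and it scales area by the fixed factor $\tfrac{3}{2}$. Consequently, $T$ attains the minimum of $|\cdot|$ on the former set if and only if $\bar T$ attains the minimum of $|\cdot|$ on the latter, which is precisely the statement of the lemma.

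There is essentially no obstacle here beyond assembling the three ingredients cleanly; the only point that needs minor care is surjectivity of $T \mapsto \bar T$ onto affine-regular hexagons, which is guaranteed by Lemma~\ref{lem:aff_hexagon}, and the degenerate case of a triangle of zero area, which can be excluded at the outset since any oval $P$ with non-trivial width function forces $|T| > 0$.
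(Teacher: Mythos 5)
Your proof is correct and follows exactly the route the paper takes: combine $w_T = w_{\Tb}$, $w_P = w_{\Pb}$ with Lemma~\ref{lem:width-is-inclusion} to turn the width constraint into containment, and use Lemma~\ref{lem:aff_hexagon} for the area factor and surjectivity onto affine-regular hexagons. (The only nitpick is that $T \mapsto \Tb$ is a surjection rather than a bijection, since translates of $T$ share the same $\Tb$, but this is immaterial to the minimization argument.)
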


This leads us to the following algorithm.  In
Section~\ref{sec:lowerbound}, we will show that the time bound is
tight.
\begin{theorem} \label{thm:alg}
  Let $\FAM$ be a set of $n$ line segments in the plane.
  Then we can find a triangle $T$ in $O(n \log n)$ time which
  is a minimum-area convex translation cover for $\FAM$.
\end{theorem}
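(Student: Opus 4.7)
The plan is to reduce the segment translation cover problem to an instance of a purely geometric problem, namely finding the smallest-area affine-regular hexagon containing a given centrally symmetric convex polygon. Actual computation of such a hexagon is deferred to the next section.

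First, I would build a centrally symmetric convex polygon $P$ that captures the joint width requirement of $\FAM$. For each segment $s_i \in \FAM$, let $\bar{s}_i$ denote its Minkowski symmetrization, a segment through the origin with the same length and direction as $s_i$. Let $P$ be the convex hull of $\bar{s}_1 \cup \cdots \cup \bar{s}_n$. This is the convex hull of $2n$ points and so can be computed in $O(n \log n)$ time. Since each $\bar{s}_i$ is a segment through the origin symmetric about it, $P$ is centrally symmetric about the origin, and projecting onto any direction gives $w_P(\theta) = \max_i w_{s_i}(\theta)$.

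Next, I would argue that a triangle $T$ is a translation cover for $\FAM$ if and only if $w_T \geq w_P$. Containing a translate of $s_i$ forces $w_T \geq w_{s_i}$, and conversely, if $w_T \geq w_P \geq w_{s_i}$ for all $i$ then Lemma \ref{lem:seg} ensures $T$ contains a translate of every $s_i$. So the task reduces to finding a minimum-area triangle $T$ with $w_T \geq w_P$. Since $P$ is centrally symmetric, $\bar{P} = P$, and Lemma \ref{lem:p_smallest_hexagon} converts this into finding a smallest-area affine-regular hexagon $H$ containing $P$. The corresponding triangle is then recovered using Lemma \ref{lem:aff_hexagon}, which also guarantees optimality since $|T| = \frac{2}{3}|H|$.

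The remaining ingredient, and the main obstacle, is an $O(n)$-time algorithm that, given a centrally symmetric convex polygon with $O(n)$ vertices, outputs a smallest-area affine-regular hexagon enclosing it. This will be developed in the next section; I anticipate a rotational sweep over the one-parameter family of circumscribing affine-regular hexagons, exploiting the central symmetry of $P$ so that the combinatorial changes in the contact structure total $O(n)$ and each critical orientation is processed in amortized constant time. Combining $O(n \log n)$ for the construction of $P$ with $O(n)$ for the hexagon subroutine yields the claimed $O(n \log n)$ bound.
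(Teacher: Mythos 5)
Your proposal is correct and follows essentially the same route as the paper: center each segment at the origin, take the convex hull in $O(n\log n)$ time to get a centrally symmetric polygon $P$ with $w_P = \max_i w_{s_i}$, reduce via Lemma~\ref{lem:p_smallest_hexagon} to the smallest enclosing affine-regular hexagon of $P$, and defer the $O(n)$ hexagon computation to the next section. The only step left implicit is the appeal to Theorem~\ref{thm:triangle} to justify that restricting the search to triangles loses nothing against arbitrary convex covers; the paper invokes this explicitly just before stating the algorithm.
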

\begin{proof}
  Given a family~$\FAM$ of~$n$ line segments, place every $s\in \FAM$
  with its center at the origin.  Let~$P$ be the convex hull of these
  translated copies.  $P$ can be computed in $O(n \log n)$ time, and
  is a centrally symmetric convex polygon with at most $2n$~vertices.
  We then compute a smallest area affine-regular hexagon~$H$
  containing~$P$. In the next section we will show that this can be
  done in time~$O(n)$.  Finally, we return a triangle~$T$ with $\Tb =
  H$.  The correctness of the algorithm follows from $w_P(\Th) =
  \max_{s \in \FAM}w_s(\Th)$ and Lemma~\ref{lem:p_smallest_hexagon}.
\end{proof}

\section{Algorithm for computing the smallest enclosing affine-regular hexagon}
\label{sec:algorithm}

In this section we discuss the following problem: Given a convex
polygon~$P$, centrally symmetric about the origin, find a
smallest-area affine-regular hexagon~$H$ such that $P \subset H$.

Let us first sketch a simple quadratic-time algorithm: The
affine-regular hexagons centered at the origin are exactly the images
of a regular hexagon centered at the origin under a non-singular
linear transformation.  Instead of minimizing the hexagon, we can fix
a regular hexagon~$H$ with center at the origin, and find a linear
transformation~$\sigma$ such that~$\sigma P \subset H$ and such that
the determinant of~$\sigma$ is maximized.  The transformation $\sigma$
can be expressed as a $2\times 2$ matrix with coefficients~$a, b, c,
d$.  The condition $\sigma P \subset H$ can then be written as a set
of~$6n$ linear inequalities in the four unknowns~$a, b, c, d$.  We
want to find a feasible solution that maximizes the determinant~$ad -
bc$, a quadratic expression.  This can be done by computing the
4-dimensional polytope of feasible solutions, and considering every
facet of this polytope in turn.  We triangulate each facet, and solve
the maximization problem on each simplex of the triangulation.

In the following, we show that the problem can in fact be solved in
linear time.

For a set $S \subset \Reals^{2}$, let $S\m = -S$ denote the mirror
image with respect to the origin.  A \emph{strip} is the area bounded
by a line~$\ell$ and its mirror image~$\ell\m$.

An affine-regular hexagon~$H$ is the intersection of three
strips~$\strip_1$, $\strip_2$, and~$\strip_3$, as in
\figurename~\ref{fig:strips}, where the sides of~$H$ are supported
by~$\strip_1$, $\strip_2$, $\strip_3$, $\strip_1$, $\strip_2$, and
$\strip_3$ in counter-clockwise order.
\begin{figure}[htb]
  \centerline{\includegraphics{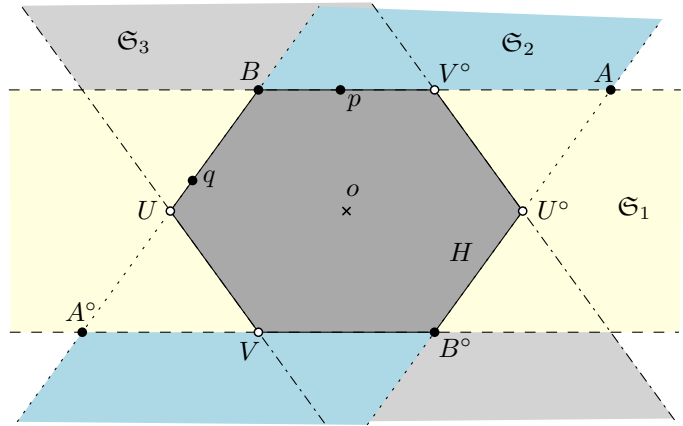}}
  \caption{The hexagon $H$ is defined by three strips.}
  \label{fig:strips}
\end{figure}
The intersection $\strip_1\cap\strip_2$ is a parallelogram~$Q = ABA\m
B\m$. Since $H$ is affine-regular, the sides supported by~$\strip_3$
must be parallel to and half the length of~$BB\m$, and so~$\strip_3$
is uniquely defined by~$\strip_1$ and~$\strip_2$: It supports the
sides~$UV$ and~$U\m V\m$ of~$H$, where $U$ is the midpoint of~$BA\m$
and $V$ is the midpoint of $A\m B\m$.  Note that $|H| = 3|Q|/4$.

It is easy to see that if $H$ is a minimum-area affine-regular hexagon
containing~$P$, then two of the three strips must be
touching~$P$. Without loss of generality, we can assume these to be
strips~$\strip_1$ and~$\strip_2$, so there is a vertex~$p$ of~$P$ on
the side $V\m B$, and a vertex~$q \in P$ on the side~$BU$.

For convenience of presentation, let us choose a coordinate system
where~$\strip_1$ is horizontal.  If we now rotate~$\strip_2$
counter-clockwise while remaining in contact with~$P$, then one side
rotates about the point~$q$, while the opposite side rotates
about~$q\m$, see \figurename~\ref{fig:rotate}.
\begin{figure}[htb]
  \centerline{\includegraphics{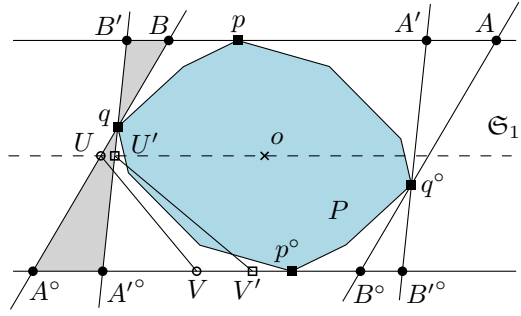}}
  \caption{Rotating strip~$\strip_2$ counter-clockwise.}
  \label{fig:rotate}
\end{figure}
The triangles~$qBB'$ and~$qA\m{A'}\m$ are similar, and since $q$ lies
above or on the $x$-axis, we have $|qA\m {A'}\m| \geq |qBB'|$. This
implies that the area of~$Q$ is nonincreasing
during this rotation.  Since $|H| = 3|Q|/4$, the area of~$H$ decreases
or remains constant as well.

Furthermore, the point~$U$ moves horizontally along the $x$-axis to
the right.  The point~$A\m$ moves horizontally to the right with at
least twice the speed of point~$U$.  As $V$ is the midpoint
of $A\m$ and $B\m$, this implies that $V$ moves
horizontally to the right with at least the speed of~$U$, and so the
line~$UV$ is rotating counter-clockwise.

It follows that while strip~$\strip_2$ rotates counter-clockwise, the
part of~$H$ lying below the $x$-axis and to the left of the
line~$pp\m$ is strictly shrinking.  It follows that there is a unique
orientation of~$\strip_2$ where the side~$UV$ touches~$P$, and the
area of~$Q$ is minimized.

Let us say that a polygon~$S$ is \emph{circumscribed to} another
polygon~$R$ if and only if $R\subset S$ and every side of~$S$
contains a point of~$R$.  Then we have shown
\begin{lemma}
  \label{lem:all-six}
  There is a minimum-area affine-regular hexagon~$H$ such
  that~$H$ is circumscribed to~$P$.
\end{lemma}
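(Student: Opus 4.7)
The plan is to consolidate the preceding geometric analysis into a clean existence argument. Let $H$ be any minimum-area affine-regular hexagon containing $P$, with $H=\strip_1\cap\strip_2\cap\strip_3$ where $\strip_3$ is determined by $\strip_1$ and $\strip_2$. Each strip contributes two opposite sides of $H$, and since both $P$ and $H$ are centrally symmetric about the origin, one boundary line of a strip touches $P$ if and only if its opposite boundary line does. Consequently, it is enough to exhibit a minimum-area $H$ in which all three strips touch $P$.

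By the observation stated just before the lemma, at any minimum at least two of $\strip_1,\strip_2,\strip_3$ touch $P$; after relabeling we may assume these are $\strip_1$ and $\strip_2$, and orient the coordinate system so that $\strip_1$ is horizontal and the contact points $p$ on $V\m B$ and $q$ on $BU$ lie on or above the $x$-axis, matching the setup in the preceding paragraphs. If $\strip_3$ also touches $P$, the symmetry remark above immediately yields that $H$ is circumscribed to $P$, and we are done.

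Otherwise, I invoke the rotation argument already performed above the lemma: rotate $\strip_2$ counter-clockwise while keeping it tangent to $P$ at $q$ and $q\m$, and leaving $\strip_1$ fixed. The preceding analysis shows that $|H|=\tfrac34|Q|$ is non-increasing along this one-parameter family, and that the portion of $H$ adjacent to the side $UV$ strictly shrinks because $UV$ itself rotates counter-clockwise. Therefore the rotation can be continued until $\strip_3$ first meets $P$; the hexagon at that moment contains $P$, has area at most $|H|$, hence is itself minimum-area, and now has all three strips touching $P$. The main subtlety I anticipate is verifying that $\strip_1$ and $\strip_2$ do not lose contact with $P$ before $\strip_3$ gains it, and that such a moment of first contact actually exists. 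The former is immediate since $\strip_1$ is held fixed and $\strip_2$ pivots about the persistent vertices $q$ and $q\m$ of $P$; the latter follows from the strict shrinking of the region of $H$ near $UV$ together with the fact that the rotation parameter ranges over a compact interval on which the event ``$\strip_3$ meets $P$'' must occur, since otherwise one could rotate $\strip_2$ all the way until $H$ degenerates, which would eventually violate $P\subset H$.
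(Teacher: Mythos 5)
Your argument is correct and is essentially the paper's own: the text preceding the lemma establishes exactly this rotation of $\strip_2$ (area of $Q$ non-increasing, side $UV$ sweeping monotonically inward until it first touches $P$), and the lemma is stated there as a summary of that analysis. Your additional care about why the first contact of $\strip_3$ with $P$ exists and why containment is preserved up to that moment is a reasonable elaboration of the same proof, not a different route.
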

In fact, we have shown that for every~$\strip_1$ there is a
unique~$\strip_2$ such that $H$ is circumscribed to~$P$.  We have
\begin{lemma} \label{lem:s1-s2}
  When $\strip_1$ rotates counter-clockwise, then the
  corresponding~$\strip_2$ also rotates counter-clockwise.
\end{lemma}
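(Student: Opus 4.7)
The plan is to run the argument preceding Lemma~\ref{lem:all-six} a second time, with the roles of $\strip_1$ and $\strip_2$ interchanged, and then combine the two resulting monotonicity statements. From that text's original analysis I already have the key fact that, holding $\strip_1$ fixed and rotating $\strip_2$ counter-clockwise while keeping it tangent to $P$, the midpoint-determined line $UV$ moves strictly inward toward $P$ and touches $P$ uniquely at the circumscribed orientation.

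Next I would repeat this analysis with $\strip_2$ held fixed (say horizontally) and $\strip_1$ rotating counter-clockwise about its contact points $p$ and $-p$ with $P$. By the same similar-triangles estimate applied now around $p$ and $-p$, the corners of $Q = \strip_1 \cap \strip_2$ slide along $\strip_2$'s fixed boundaries, and tracking the midpoints $U$ and $V$ shows that the line $UV$ again rotates counter-clockwise. The crucial new claim is that in this swapped setup the relevant portion of $H$ adjacent to the cut \emph{grows} rather than shrinks, so that $UV$ now moves strictly outward, away from $P$. This sign flip is forced by the orientation-sensitive cyclic side-ordering $\strip_1, \strip_2, \strip_3$ of the hexagon in CCW order: exchanging the fixed/rotating roles effectively reverses this cyclic direction and therefore flips ``shrinking'' to ``growing''.

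Combining the two statements, suppose we start from a circumscribed hexagon at angles $(\theta_1, \theta_2)$ and rotate $\strip_1$ counter-clockwise by a small $d\alpha > 0$ while momentarily holding $\strip_2$ fixed. The swapped analysis says $UV$ moves outside $P$, so the circumscribed condition fails on the $\strip_3$-side. To restore tangency we must move $UV$ back inward; by the original analysis this requires rotating $\strip_2$ counter-clockwise. Hence $\strip_2$ rotates CCW whenever $\strip_1$ does, proving the lemma.

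I expect the principal obstacle to be rigorously establishing the sign-flip in the swapped analysis. The cyclic-orientation heuristic is suggestive but not itself a proof; the cleanest verification is either a careful re-derivation of the similar-triangles estimate (paying attention to which pair of corners of $Q$ is being cut and how the cut-region deforms as the contact points of $\strip_1$ serve as pivots rather than those of $\strip_2$) or a direct coordinate calculation of the signed distance from the origin to the line $UV$ as a function of the angle of $\strip_1$, with $\strip_2$ held fixed, verifying that this distance is monotonically increasing at the critical configuration.
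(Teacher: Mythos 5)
Your overall strategy is exactly the paper's: perturb $\strip_1$ counter-clockwise while freezing $\strip_2$, argue that the third side $UV$ then leaves $P$ on the outside, and invoke the monotonicity established before Lemma~\ref{lem:all-six} (rotating $\strip_2$ counter-clockwise moves the relevant part of $H$, and in particular $UV$, strictly inward until it first touches $P$) to conclude that $\strip_2$ must turn counter-clockwise to restore circumscription. So the architecture of the argument is right, and the second half of your plan is sound.

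The genuine gap is the one you flag yourself: the ``sign flip'' is asserted, not proved, and the cyclic-orientation heuristic you offer in its place cannot be promoted to a proof by pure symmetry. The construction of $\strip_3$ is not symmetric in $\strip_1$ and $\strip_2$: it cuts off the corners $A,A\m$ of the parallelogram $Q$ rather than $B,B\m$, and which pair is cut is dictated by the prescribed counter-clockwise order $\strip_1,\strip_2,\strip_3$ of the supporting strips, so ``swapping roles'' changes the hexagon itself, not just the orientation of the analysis. What is needed --- and what the paper supplies in a few lines --- is a direct computation in the unswapped picture: when the boundary lines of $\strip_1$ rotate about $p$ and $p\m$, the vertices $B$ and $A\m$ slide downward along the \emph{fixed} boundary line $BA\m$ of $\strip_2$; hence $U$, their midpoint, slides downward along that line, and $V$, the midpoint of $A\m B\m$, slides downward along the parallel line $oV$ (here one uses that $A\m$ moves at least as fast as $B$, and hence at least twice as fast as $V$, because the pivot $p$ lies on the side $V\m B$). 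This puts the new edge $U'V'$ strictly outside the old hexagon, which is the outward motion you need. Your proposed fallbacks (re-running the similar-triangles estimate with the correct pivots, or a coordinate computation of the signed distance from $o$ to $UV$) would also close the gap, but as written the proof is incomplete precisely at its crux. Note also that the full lemma as used later needs the additional observation, which the paper derives from the same speed comparison, that the line $UV$ itself rotates counter-clockwise during the rotation of $\strip_1$; your sketch gestures at this but again without the quantitative midpoint/speed argument that justifies it.
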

\begin{proof}
  Consider a configuration where $H$ is circumscribed to~$P$, and
  rotate $\strip_1$ slightly around~$p$ in counter-clockwise
  direction, keeping~$\strip_2$ fixed.  Then $B$ and~$A\m$ move
  downwards along the line~$BA\m$, see \figurename~\ref{fig:rotate2}.
  \begin{figure}[htb]
    \centerline{\includegraphics{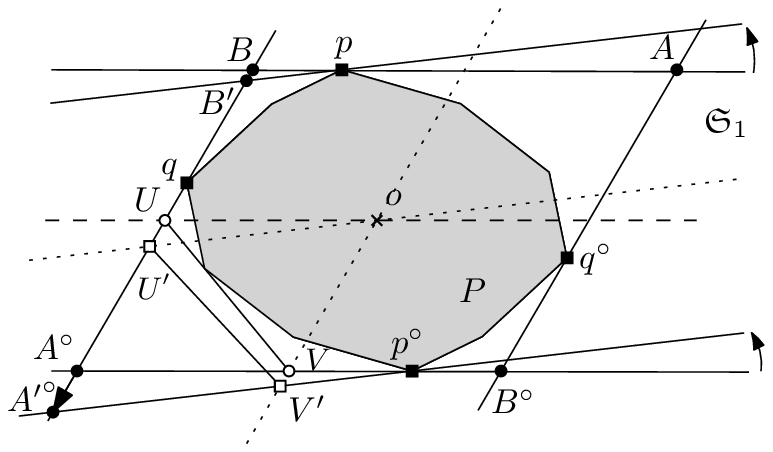}}
    \caption{Rotating~$\strip_1$ counter-clockwise.}
    \label{fig:rotate2}
  \end{figure}

  The point~$V$ moves downwards along the line~$oV$, parallel
  to~$BA\m$. It follows that the new edge $U'V'$ now lies strictly
  outside the old hexagon~$H$, and so $U'V'$ cannot possibly touch or
  intersect~$P$.  By the arguments above, this implies that
  strip~$\strip_2$ now needs to rotate counter-clockwise as well to
  let $H$ be circumscribed to~$P$.

  Furthermore, similar to the arguments above, we observe that $A\m$
  moves with speed at least twice the speed of~$V$.  Since $U$ is the
  midpoint of~$BA\m$, it moves with at least half the speed of~$A\m$,
  so $U$ moves with speed at least equal to the speed of~$V$. Since
  $U$ and $V$ move on parallel lines, it follows that the line $UV$ is
  rotating counter-clockwise during the rotation of~$\strip_1$.
\end{proof}

We can now show that we can in fact choose~$H$ such that one of its
sides contains an edge of~$P$:
\begin{lemma}
  \label{lem:hexagon}
  There exists a minimum-area affine-regular hexagon $H$ containing $P$
  such that a side of~$H$ contains an edge of~$P$. In addition, if
  no minimum-area affine-regular hexagon containing $P$ shares a
  vertex with $P$, then each such minimum-area affine-regular hexagon
  has a side containing a side of $P$.
\end{lemma}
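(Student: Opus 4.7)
The plan is to parametrize the family of circumscribed affine-regular hexagons by the orientation~$\theta_1$ of~$\strip_1$. By Lemma~\ref{lem:all-six} and the arguments preceding it, for each $\theta_1$ there is a unique such hexagon $H(\theta_1)$, and its area $A(\theta_1) := |H(\theta_1)|$ is a continuous function on the circle of orientations. I classify each $\theta_1$ by the contact pattern: call $\theta_1$ a \emph{type-(a) orientation} when the outer normal of one of the three strips coincides with the outer normal of an edge of~$P$---equivalently, a side of $H(\theta_1)$ contains an edge of~$P$---and a \emph{type-(b) orientation} when two adjacent strips share a contact vertex of~$P$, equivalently when a vertex of $H(\theta_1)$ coincides with a vertex of~$P$.

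Away from type-(a) orientations, the three contact vertices $p_1, p_2, p_3$ of $P$ with $\strip_1, \strip_2, \strip_3$ are locally constant, so $A$ is analytic on each maximal such interval. On such a piece, the parallelogram $Q = \strip_1\cap\strip_2$ has the closed form $|Q| = 4(p_1\cdot n_1)(p_2\cdot n_2)/|\sin(\theta_2-\theta_1)|$, where $n_i$ is the outward unit normal of $\strip_i$, and the requirement that the induced $\strip_3$ support $P$ at $\pm p_3$ implicitly pins $\theta_2$ to a smooth function $f(\theta_1)$, so $A(\theta_1) = \tfrac34\,|Q(\theta_1, f(\theta_1))|$. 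Any interval of type-(b) orientations is bounded by type-(a) orientations, since it ends exactly when one of the relevant strip normals exits the normal cone at the shared vertex onto an adjacent edge normal of~$P$. The central technical claim is that $A$ is strictly monotone in $\theta_1$ on each maximal analytic piece. Its proof combines the rotation argument preceding Lemma~\ref{lem:all-six}---which shows rotating $\strip_2$ alone changes $|Q|$ monotonically---with Lemma~\ref{lem:s1-s2} and its symmetric counterpart for the pair $(\strip_1, \strip_3)$: as $\theta_1$ varies, the induced rotations of $\strip_2$ and $\strip_3$ are coupled in the same direction, and I expect their combined contribution to $A'(\theta_1)$ to have a constant sign throughout each analytic piece.

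Granted strict monotonicity, every minimum of $A$ must lie at a boundary between pieces, i.e., at a type-(a) orientation, which yields a minimum-area hexagon whose side contains an edge of~$P$; this gives the first claim. For the second claim, assume no minimum of $A$ occurs at a type-(b) orientation; strict monotonicity also excludes minima in the interior of every analytic piece (ordinary or type-(b)), so every minimum is a type-(a) orientation and hence has a side containing an edge of~$P$. The main obstacle is the strict-monotonicity step: after eliminating $\theta_2$ through the implicit function~$f$, the expression for $A(\theta_1)$ is a constrained trigonometric ratio, and extracting the sign of $A'(\theta_1)$ requires carefully tracking the interaction of three simultaneous strip rotations. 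The qualitative rotation arguments already set up for Lemmas~\ref{lem:all-six} and~\ref{lem:s1-s2} give the correct intuition but need to be made quantitative and combined into a single sign statement covering all three strip pairs.
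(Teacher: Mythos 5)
Your overall architecture---parametrize the circumscribed hexagons by the orientation $\theta_1$ of $\strip_1$, observe that $A(\theta_1)$ is continuous and piecewise analytic with breakpoints exactly at the orientations where some strip supports an edge of~$P$, and conclude that minima occur at breakpoints---is sound and close in spirit to what is needed. But the proof has a genuine gap at exactly the step you flag as the ``main obstacle'': the strict monotonicity of $A$ on each analytic piece is never established, and in fact it is \emph{false}. Take $P$ to be a regular $6n$-gon (the setting of Lemma~\ref{lem:regular}). Every reflection symmetry of $P$ maps the family of circumscribed affine-regular hexagons to itself, so $A$ is symmetric about each axis direction of $P$; in particular it is symmetric about the orientations pointing at vertices of $P$, which lie in the \emph{interior} of the analytic pieces. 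Since $A$ is non-constant there (the regular circumscribed hexagon is the strict minimum), $A$ has interior local \emph{maxima} on these pieces and cannot be strictly monotone. So no amount of careful sign-tracking of the three coupled strip rotations will rescue the claim as stated.

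What is true, and is all you need for ``minima occur only at breakpoints,'' is the weaker statement that $A$ has no local \emph{minimum} in the interior of an analytic piece (away from configurations where a vertex of $H$ meets a vertex of $P$). This is what the paper proves, by a local computation rather than a global monotonicity argument: normalize $H$ by a linear map to the fixed hexagon with vertices $(1,0),(1,1),(0,1),\dots$, let $a=(1,s)$, $b=(t,1)$, $c=\alpha a+\beta b$ be the three contact points, and note that the constraint that $c$ stay on the line $y=x+1$ pins $t$ to an affine function of $s$. Then $2|oab|=1-\tfrac{\alpha}{\beta}s^2-\tfrac{\beta-\alpha-1}{\beta}s$ as in Equation~\eqref{eq:oab}, and since $c$ lies in neither the quadrant of $a,b$ nor the opposite one, $\alpha/\beta<0$, so $|oab|$ is a convex quadratic in $s$ and has no interior local maximum; as $|H(\cdot)|$ is proportional to $1/|oab|$ along this family (which is, up to reparametrization, your family $H(\theta_1)$ restricted to one analytic piece), $A$ has no interior local minimum there. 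You would then still need the separate degenerate case where two contact points coincide at a vertex of $H$ shared with $P$: there the same one-parameter deformation is area-preserving (a shear), so one can slide at constant area until a new contact appears; this is what produces the first conclusion in general and isolates the vertex-sharing hypothesis in the second. I recommend replacing your monotonicity claim by this no-interior-local-minimum lemma; the rest of your reduction then goes through.
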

\begin{proof}
  By Lemma~\ref{lem:all-six}, there exists a minimum-area affine-regular
  hexagon~$H$ such that every side of $H$ contains a point of~$P$.
  If a side of $H$ contains an edge of~$P$, then we are done.  In the
  following, we thus assume that every side of $H$ intersects $P$ in
  a single point.  Also, we assume the vertices of $H$ are $(1,0),
  (1,1), (0, 1)$ and their antipodal points $(-1, 0), (-1, -1), (0,
  -1)$.  This can be done by applying a nonsingular linear
  transformation, see \figurename~\ref{fig:hexagon}.
  \begin{figure}[h]
    \centerline{\includegraphics{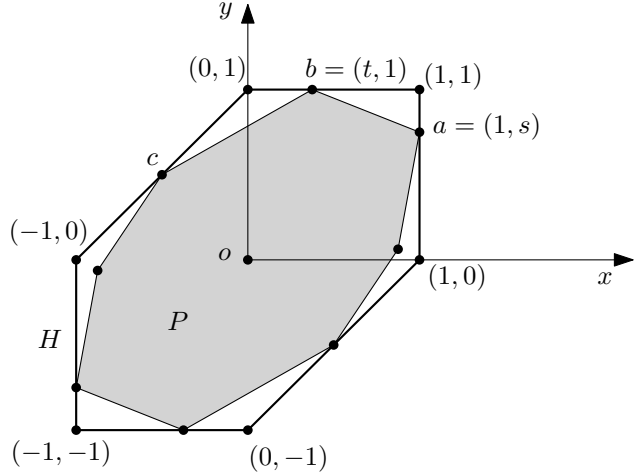}}
    \caption{The hexagon $H$, and the convex polygon $P$
      (shaded).}
    \label{fig:hexagon}
  \end{figure}

  First, we consider the case where no vertex of~$H$ coincides with a
  vertex of~$P$.  We claim that in this case, there exists a
  nonsingular linear transformation $\sigma$ such that $\sigma P
  \subset H$ and $|\sigma P| > |P|$ hold, implying that the inverse
  image $\sigma^{-1}H$ of $H$ also contains $P$ and its area
  $|\sigma^{-1} H|$ is strictly smaller than $|H|$, a contradiction.
  We denote by $a,b,c$ the three contact points as in
  \figurename~\ref{fig:hexagon}.  The point $c$ is a linear
  combination $c=\alpha a+\beta b$ of $a$ and $b$, so we have
  $c=(\alpha+\beta t,\alpha s+\beta)$. Since $c$ does not lie
  in the same quadrant as $a$ and $b$, nor the opposite quadrant,
  then $\alpha \beta \neq 0$ and $\alpha/\beta<0$.
  As the point $c$ lies on the
  line with equation $y=x+1$, we have
  $
  t=\frac{\alpha}{\beta}s+\frac{\beta-\alpha-1}{\beta}.
  $
  Then the area of the triangle $oab$ is given by
  \begin{equation}
    2|oab|=1-\frac{\alpha}{\beta}s^2-\frac{\beta-\alpha-1}{\beta} s.
    \label{eq:oab}
  \end{equation}
  Assume we apply a linear transformation $\sigma$ to $P$ such that
  each point in $a,b,c$ moves along the side of $H$ that currently contains
  it. Thus, $s$ changes, and $t$ changes in such a way that $c$
  remains on the same side of $H$. Then the area of $P$ is
  proportional to the area $|oab|$. As we observed that $\alpha/\beta < 0$,
  then the coefficient of $s^2$ in Equation~\eqref{eq:oab} is positive,
  so the area $|oab|$ cannot be at a local maximum, a contradiction.

  Consider now the case where at least one of the contact points $a,
  b, c$ lies at a vertex of~$H$.  Since each side of~$H$ has a
  single-point intersection with~$P$, two of $a, b, c$ are identical.
  Using a suitable linear transformation, we can assume $b = c$ in
  \figurename~\ref{fig:hexagon}.  Any linear transformation~$\sigma$
  that keeps $b = c$ fixed and moves~$a$ along the vertical side
  of~$H$ keeps areas unchanged, and so $|\sigma P| = |P|$ and thus
  $|\sigma^{-1} H| = |H|$.  Hence, there exists a linear
  transformation $\sigma'$ such that $\sigma' P \subset H$, $|\sigma'
  P| = |P|$, and $\sigma' P$ has one more contact point with the sides
  of $H$.
\end{proof}

We can therefore assume that the minimum-area affine-regular hexagon
is defined by two strips $\strip_1$ and~$\strip_2$, where $\strip_1$
supports an edge of~$P$, and $\strip_2$ is the unique strip such that
the resulting hexagon is circumscribed to~$P$.  
We now give a
linear-time algorithm to enumerate these hexagons, over all edges
of~$P$. 

\begin{theorem}
  \label{thm:min-circumscribed-hexagon}
  Given a centrally-symmetric convex $2n$-gon~$P$, a smallest-area
  affine-regular hexagon enclosing~$P$ can be found in time~$O(n)$.
\end{theorem}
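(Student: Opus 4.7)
The plan is to use Lemma~\ref{lem:hexagon} to reduce the search space to $O(n)$ candidate hexagons, and then enumerate these candidates in total time $O(n)$ via a three-pointer rotating-calipers sweep, guided by the monotonicity established in Lemma~\ref{lem:s1-s2}.

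By Lemma~\ref{lem:hexagon}, some minimum-area affine-regular hexagon $H$ has a side that contains an edge of $P$. Since $P$ is centrally symmetric, the opposite side of $H$ then contains the antipodal edge, so this event is parameterized by the $n$ antipodal pairs of edges of $P$; enumerate them as $e_1, \ldots, e_n$ in counter-clockwise order. For each $e_i$, let $\strip_1$ be the strip supporting $e_i$. By Lemma~\ref{lem:all-six} together with the uniqueness argument preceding Lemma~\ref{lem:s1-s2}, there is a unique strip $\strip_2$ such that the resulting hexagon $H_i$ is circumscribed to $P$. Encode $H_i$ by a triple $(e_i, q_i, r_i)$, where $q_i$ is the vertex of $P$ supported by $\strip_2$ and $r_i$ is the vertex of $P$ supported by the side $UV$ of $H_i$; given such a triple, $H_i$ and its area are computable in $O(1)$. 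We return the smallest such area over $i = 1, \ldots, n$.

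For the $O(n)$ bound I invoke Lemma~\ref{lem:s1-s2}: as $\strip_1$ rotates counter-clockwise, so does $\strip_2$, whence $q_i$ advances counter-clockwise around $P$; the final paragraph of the same lemma shows that $UV$ rotates counter-clockwise, whence $r_i$ also advances counter-clockwise. A three-pointer sweep thus suffices: advance $e$ through the edges of $P$ in order, and for each $e$ advance $q$ and $r$ counter-clockwise until the triple $(e, q, r)$ yields a hexagon $H$ that is circumscribed to $P$. Validity is certified by a constant number of half-plane tests---the orientation of $\strip_2$ through $q$ and $-q$ that forces $UV$ to pass through $r$ must lie in the angular range spanned by the two edges of $P$ incident to $q$, and the neighbours of $r$ on $\partial P$ must lie on the $P$-side of the line $UV$. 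Each of the three pointers traverses $\partial P$ only once, so the total cost is $O(n)$.

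The main subtlety will be formulating the pointer-advance rules precisely enough that they can be evaluated in $O(1)$ and that the sweep never skips the correct configuration; one must carefully decide, at each step of the sweep, whether the failing test is at $q$ (advance $q$), at $r$ (advance $r$), or neither (record the area and advance $e$). A secondary nuisance is the degenerate cases flagged in Lemma~\ref{lem:hexagon}---a vertex of $H$ coinciding with a vertex of $P$, or a side of $H$ containing an entire edge of $P$---which appear as intervals of equal-area hexagons along the sweep and do not affect the asymptotic bound.
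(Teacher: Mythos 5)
Your proposal is correct and is essentially the paper's own proof: the paper also runs a three-pointer rotating-calipers sweep maintaining an edge $e$ defining $\strip_1$, the contact vertex $q$ of $\strip_2$, and a vertex $r$ supporting a tangent parallel to $UV$, advancing each monotonically (justified by Lemma~\ref{lem:s1-s2}) and recording the circumscribed hexagon's area for each edge. The only difference is presentational: the paper phrases the sweep as a continuous rotation of $\strip_2$ with three event types, while you phrase it as discrete pointer advances with $O(1)$ validity tests.
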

\begin{proof}
  We use a rotating calipers~\cite{Toussaint83} type algorithm.  It
  maintains an edge~$e$ of~$P$ defining~$\strip_1$, a second
  strip~$\strip_2$ and the vertex~$q$ of~$P$ where $\strip_2$
  touches~$P$, and a vertex~$r$ of~$P$.  Let $H = BUVB\m U\m V\m$ be
  the hexagon defined by $\strip_1$ and~$\strip_2$, as in
  \figurename~\ref{fig:strips}.

  The algorithm proceeds by rotating~$\strip_2$ around~$q$ as in
  \figurename~\ref{fig:rotate}, and maintains the invariant that~$P$
  has a supporting line in~$r$ that is parallel to~$UV$.

  We initialize~$e$ to an arbitrary edge of~$P$.  Let $e$ be
  horizontal for ease of presentation, with~$P$ below~$e$, let $q$ be
  the left endpoint of~$e$, and let $r$ be the leftmost vertex of~$P$.
  In the initial configuration, $\strip_2$ is obtained from~$\strip_1$
  by a counter-clockwise rotation around~$q$ by an infinitely small
  amount.

  We then rotate $\strip_2$ counter-clockwise, until one of the
  following events occurs:
  \begin{denseitems}
  \item If $r$ no longer supports a tangent to~$P$ parallel to~$UV$,
    replace $r$ by the counter-clockwise next vertex of~$P$, and
    continue rotating~$\strip_2$.
  \item If $\strip_2$ supports an edge of~$P$, then replace $q$ by the
    counter-clockwise next vertex of~$P$, and continue
    rotating~$\strip_2$.
  \item If $UV$ touches~$r$, then we have found the unique $\strip_2$
    such that $H$ is circumscribed to~$P$.  We compute its area and
    update a running minimum.  Then replace $e$ by the
    counter-clockwise next edge of~$P$.  As long as $r$ does not
    support a tangent to~$P$ parallel to~$UV$, we replace $r$ by the
    counter-clockwise next vertex of~$P$.  Then continue
    rotating~$\strip_2$.
  \end{denseitems}
  The algorithm ends when $n$~edges have been considered.  Its
  running time is clearly linear.
\end{proof}

\section{Lower bound for computing a translation cover}
\label{sec:lowerbound}
In this section, we prove an $\Omega(n\log n)$ lower bound for the
problem of computing a minimum-area translation cover for a set
of $n$ line segments.
We first need the following result on regular
$6n$-gons (see Figure~\ref{fig:regular}(a).):

\begin{lemma}\label{lem:regular}
Let $R$ denote a regular $6n$-gon centered at the origin, for some
integer $n \geq 1$. Then any minimum-area affine-regular hexagon
enclosing $R$ is a regular hexagon such that every edge of
this hexagon contains an edge of $R$.
\end{lemma}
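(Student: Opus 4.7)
The strategy combines Lemma~\ref{lem:hexagon} with a reflection-symmetry argument. For $n = 1$, $R$ is itself a regular hexagon, so $H = R$ is the unique minimum-area affine-regular hexagon containing $R$ and the claim is immediate. Assume now $n \geq 2$. By Lemma~\ref{lem:hexagon} there exists a minimum-area affine-regular hexagon $H$ containing $R$ with some side along an edge of $R$; set up coordinates so this edge is $e_0$, horizontal at the bottom of $R$, with $R$ centered at the origin. Let $\strip_1$ be the corresponding strip of $H$, and let $\ell_0$ be the $y$-axis, which is the perpendicular bisector of $e_0$ and a symmetry line of $R$.

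Apply the reflection $\sigma_{\ell_0}$ to $H$. The image $\sigma_{\ell_0}(H)$ is an affine-regular hexagon circumscribed to $R$ that retains $\strip_1$. Because $\sigma_{\ell_0}$ reverses counter-clockwise orientation, the CCW-next strip after $\strip_1$ in $\sigma_{\ell_0}(H)$ equals $\sigma_{\ell_0}(\strip_3)$; by the uniqueness in Lemma~\ref{lem:s1-s2} this must coincide with $\strip_2$, giving $\strip_3 = \sigma_{\ell_0}(\strip_2)$. Parametrize $\strip_2$ by the angle $\alpha \in (0, \pi/2)$ of its edges and by the distance $d$ from the origin to its bounding lines. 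Affine-regularity---namely that $\strip_3$ is determined by $\strip_1, \strip_2$ through the construction of Lemma~\ref{lem:all-six}---together with $\strip_3 = \sigma_{\ell_0}(\strip_2)$ forces $d = 2r\cos\alpha$, where $r$ is the apothem of $R$; a direct computation then yields $|H| = 6r^2 \cot\alpha$. The containment $\strip_2 \supset R$ requires $d \geq r$ (since $R$'s width is at least $2r$ in every direction), so $\cos\alpha \geq 1/2$, i.e., $\alpha \leq \pi/3$. As $\cot\alpha$ is strictly decreasing on $(0, \pi/2)$ and $\alpha = \pi/3 = n\pi/(3n)$ is a valid edge direction of $R$, the minimum is attained exactly at $\alpha = \pi/3$, where $H$ is the regular hexagon with sides containing the edges $e_0, e_n, e_{2n}, e_{3n}, e_{4n}, e_{5n}$.

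To extend the characterization to every minimum-area hexagon, suppose for contradiction that some minimum-area $H^*$ shares a vertex $b$ with $R$. The linear-transformation iteration from the proof of Lemma~\ref{lem:hexagon}---each step fixing $b$ while adding a contact point between $R$ and the sides of $H^*$---eventually yields a minimum-area hexagon $\tilde H$ that has a side along an edge of $R$ and still has $b$ as one of its vertices. By the preceding analysis, $\tilde H$ is one of the described regular hexagons, whose vertices lie at intersections of lines extending edges $e_k$ and $e_{k+n}$ of $R$; for $n \geq 2$ these intersection points are strictly outside $R$, contradicting $b \in R$. Hence no minimum-area hexagon shares a vertex with $R$, and the second part of Lemma~\ref{lem:hexagon} implies every minimum-area hexagon has a side containing an edge of $R$ and is therefore one of the $n$ rotations of the described regular hexagon. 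The main delicate step is tracking the orientation reversal in the symmetry argument when invoking the uniqueness of Lemma~\ref{lem:s1-s2}.
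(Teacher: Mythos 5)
Your treatment of the hexagons that have a side along an edge of $R$ is essentially correct and takes a genuinely different route from the paper: where the paper forces the contact edge to be centered on the $y$-axis by a horizontal shear (and argues the sheared hexagon strictly contains $H^*$), you derive the symmetry of $H$ about $\ell_0$ from the reflection together with the uniqueness of $\strip_2$ for a given $\strip_1$, and then minimize the explicit area $6r^2\cot\alpha$ over the one-parameter symmetric family; the computation ($d=2r\cos\alpha$, $|H|=3rd/\sin\alpha$) checks out. One caveat: the uniqueness you invoke is only established for hexagons \emph{circumscribed} to $R$, so you must first note that the minimum-area hexagon supplied by Lemma~\ref{lem:hexagon} can be taken circumscribed; this is true but not free.

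The genuine gap is the last paragraph. Excluding minimum-area hexagons that share a vertex with $R$ is exactly the hypothesis of the second clause of Lemma~\ref{lem:hexagon}, and hence the only bridge from ``some minimum is the regular hexagon'' to ``every minimum is,'' so it cannot be waved through. Your argument relies on ``the linear-transformation iteration from the proof of Lemma~\ref{lem:hexagon}'' both preserving the shared vertex $b$ and terminating in a configuration where a side of the hexagon contains an edge of $R$. Neither is guaranteed: Lemma~\ref{lem:hexagon} asserts only that \emph{some} minimum has a side containing an edge of $P$, its proof describes a single step that adds one contact point, and the iteration can stall before any side contains an edge of $R$ --- once two linearly independent contact points sit at vertices of the hexagon, the only linear map fixing both is the identity, so no further area-preserving deformation is available. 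The paper closes this case by a direct computation instead: if $H$ has the antipodal vertices $V,V\m$ of $R$ among its own vertices, affine-regularity forces the two sides not adjacent to $V,V\m$ to be parallel to $VV\m$ with half its length, the adjacent sides make angle at most $\pi/12$ with the perpendicular bisector of $VV\m$, and one checks $|H|>|H^*|$. You need this (or some other self-contained) argument; as written, the vertex-sharing case is not excluded and the ``any minimum-area hexagon'' part of the statement is unproved.
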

  \begin{figure}[h]
    \centerline{\includegraphics[width=.9\textwidth]{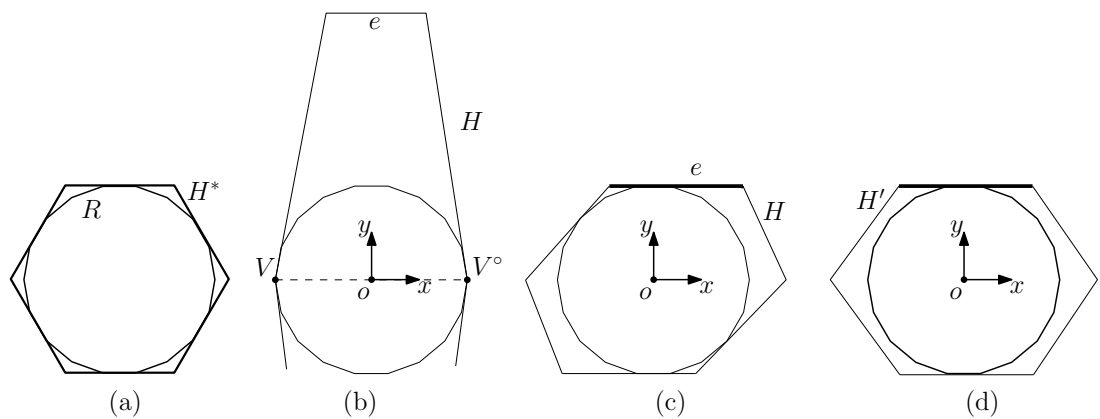}}
    \caption{Proof of Lemma~\ref{lem:regular}.
	(a) An optimal enclosing hexagon $H^*$ and the regular 18-gon $R$.
	(b) When $H$ and $R$ share two vertices, the area of $H$ is
	larger than the area of $H^*$.
	(c) An affine-regular enclosing hexagon $H$.
      (d) The hexagon $H'$.}
    \label{fig:regular}
  \end{figure}
\begin{proof}
The statement is trivial for $n=1$, so assume $n \geq 2$.
Let $H^*$ denote a regular hexagon enclosing $R$, and such that
each side of $H^*$ contains a side of $R$. Let $H$ denote another
smallest affine-regular hexagon enclosing $R$. We will argue that
$H$ is also a regular hexagon whose sides contain sides of $R$.

We first  rule out the case where $H$ shares a vertex with $R$.
For sake of contradiction, assume that $H$ shares two
opposite vertices $V$ and $V^\circ$ of $R$. Without loss of
generality, we assume that $V,V^\circ$ are on the $x$-axis.
The edges $e,e^\circ$ of $H$ that are not adjacent to $V,V^\circ$
are parallel to $VV^\circ$ and have half the length of~$\overline{VV^\circ}$.
In addition, the edges of $H$ that are adjacent to $V$ and $V^\circ$
make an angle at most $\pi/12$ with the $y$-axis.
(See Figure~\ref{fig:regular}(b).)
Then a direct calculation shows that $|H|>|H^*|$, a contradiction.

Thus, by Lemma~\ref{lem:hexagon}, we know that an edge $e$ of $H$
contains an edge of $R$. Without loss of generality we assume that
this edge is parallel to the $x$-axis. (See
Figure~\ref{fig:regular}(c).)  For sake of contradiction, assume that
$e$ is not symmetric with respect to the $y$-axis.  Consider the
hexagon $H'$ that is obtained from $H$ by a horizontal shear
transformation that moves~$e$ and the opposite edge parallel to the
$x$-axis, until they are centered at the $y$-axis.  Then $H'$ (see
Figure~\ref{fig:regular}(d)) is an affine-regular hexagon
containing~$R$ that is symmetric with respect to the $y$-axis and that
only touches~$R$ along its top and bottom edges. This implies that
$H'$ strictly contains a regular hexagon~$H^*$ enclosing~$R$, and
hence $|H|=|H'|>|H^*|$, a contradiction.
 
Therefore, $e$ is symmetric
with respect to the $y$-axis, and thus $H$ is symmetric with respect
to the $y$-axis. Only one such affine-regular hexagon is circumscribed
to~$R$, so $H=H^*$.
\end{proof}

We are now able to prove our lower bound.
\begin{theorem}\label{th:lowerbound}
In the algebraic computation tree model,
and in the worst case, it takes $\Omega(n \log n)$ time
to compute a minimum-area translation cover for a
family $\FAM$ of $n$ line segments in the plane.
\end{theorem}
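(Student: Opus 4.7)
The plan is to reduce from a problem known to require $\Omega(n\log n)$ time in the algebraic computation tree model, such as element distinctness, set equality, or MAX-GAP for $n$ real numbers. Given inputs $x_1,\ldots,x_n$, I would construct a family $\FAM$ of $\Theta(n)$ line segments whose orientations (and possibly lengths) depend on the $x_i$. By Theorem~\ref{thm:alg}, computing a minimum-area translation cover for $\FAM$ is equivalent to computing the minimum-area enclosing affine-regular hexagon $H$ of the centered convex hull $P$ of the segments, and Lemma~\ref{lem:regular} rigidly pins down the structure of $H$ whenever $P$ is a regular $6n$-gon.

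Concretely, I would choose the encoding so that a distinguished ``reference'' input (say all $x_i$ equal) produces $P$ equal to a regular $6n$-gon, while generic inputs deform $P$ continuously. In the reference case Lemma~\ref{lem:regular} forces a unique optimal hexagon whose sides contain $6$ specific edges of $P$; in the perturbed case, Lemma~\ref{lem:hexagon} guarantees that the optimal $H$ either shares a vertex with $P$ or has a side containing an edge of $P$. Then, in $O(n)$ post-processing time, one can read from the output triangle $T$ (via $\Tb = H$) which edge or vertex of $P$ is selected, and this combinatorial datum can be arranged to encode the answer to the source problem --- for instance, duplicates among the $x_i$ collapse pairs of sides of $P$ and therefore change the combinatorial type of the contact between $H$ and $P$ in a way distinguishable from the all-distinct case.

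The main obstacle is a quantitative strengthening of Lemma~\ref{lem:regular}: the lemma is stated only for the idealized regular $6n$-gon, but the reduction requires that, under small input-driven perturbations of $P$, the optimal hexagon and its contact pattern depend predictably (ideally continuously) on those perturbations. I expect this will follow from the monotonicity machinery already developed in the proofs of Lemma~\ref{lem:all-six} and Lemma~\ref{lem:s1-s2} --- which show that as $\strip_1$ rotates, the matching $\strip_2$ rotates monotonically and $|H|$ varies continuously --- combined with a careful case analysis of which edges of $P$ support the unique optimum. Once the perturbation/robustness statement is in hand, the remainder of the proof is routine: any $o(n\log n)$-time algorithm for the cover problem would compose with the encoding and the $O(n)$-time decoding to give an $o(n\log n)$-time algorithm for the source problem, contradicting its known lower bound in the algebraic computation tree model.
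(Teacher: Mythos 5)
Your overall strategy---reduce from a gap-type problem, encode the input as segment orientations so that the centered hull $P$ is (close to) a regular $6n$-gon, and invoke Lemma~\ref{lem:regular} to pin down the optimal hexagon---is the same as the paper's. But the step you flag as the ``main obstacle'' is a genuine gap, not a routine technicality, and it is exactly the difficulty the paper's construction is built to avoid. Your plan requires a quantitative perturbation statement: that under small input-driven deformations of the regular $6n$-gon, the optimal hexagon's contact pattern with $P$ varies predictably, so that the answer to the source problem can be decoded from which edge or vertex of $P$ the output hexagon touches. The monotonicity arguments of Lemmas~\ref{lem:all-six} and~\ref{lem:s1-s2} give you continuity of $\strip_2$ as a function of $\strip_1$ for a \emph{fixed} polygon; they say nothing about how the global minimizer moves as the polygon itself is perturbed, and near the regular $6n$-gon the minimizer is highly degenerate (by symmetry there are $n$ tied optima), so the contact pattern of the computed optimum is \emph{not} a continuous or even well-defined function of the perturbation. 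Making ``duplicates collapse sides of $P$ in a distinguishable way'' rigorous would require controlling exactly this degeneracy, and nothing in the paper's machinery does so.

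The paper sidesteps all of this by never perturbing $R$ and never reading a contact pattern. It keeps the regular $6n$-gon $R$ fixed, places the $n$ input points on the arc $C_{[0,\pi/3)}$ of the circumscribed circle, and forms $P$ as the hull of $R$ together with the six rotated copies of the points; the family $\FAM$ is then the set of $6n$ diagonals of $P$. The dichotomy is decided by a single \emph{area} comparison against the fixed constant $|H^*|$ (the area of the optimal regular hexagon for $R$): if some arc $C_{(k\pi/3n,(k+1)\pi/3n)}$ is empty, the corresponding regular hexagon still contains all added points and the optimum equals $|H^*|$; if every arc is occupied, then since any hexagon containing $P$ contains $R$, Lemma~\ref{lem:regular} forces any area-$|H^*|$ candidate to be one of the $n$ regular hexagons circumscribed to $R$, each of which is blocked by some added point, so the optimum is strictly larger. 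This needs only the rigidity of Lemma~\ref{lem:regular} for the \emph{unperturbed} $R$, plus an $O(1)$-time area computation from the output triangle. (The paper also proves the $\Omega(n\log n)$ bound for the source ``empty arc'' problem directly via Ben-Or, by exhibiting $n!$ connected components of negative instances---a step you would likewise need to carry out, or verify, for whichever source problem you pick.) To repair your proof, replace the perturbation-and-decode scheme with this fixed-threshold area test; as written, the argument is incomplete.
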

\begin{proof}
For an interval $I \subset \R$, we denote by $C_{I}$ the arc of the unit circle
corresponding with polar angles in the interval $I$, that is
$C_{I}=\{(\cos \theta,\sin \theta) \mid  \theta \in I\}$. As
$C_{I}$ is the intersection of a circle and a cone,
a node of an algebraic computation tree can decide whether
a point lies in $C_{I}$.

We use a reduction from the following problem.
The input is a set of points $p_1,\dots,p_{n} \in C_{[0,\pi/3)}$.
The goal is to decide whether there exists an integer $0 \leq k < n$
such that $C_{(k\pi/3n,(k+1)\pi/3n)}$ is empty, that is, this
arc does not contain any
point $p_i$. It follows from Ben-Or's bound~\cite{Ben83} that any algebraic
computation tree that decides this problem has depth $\Omega(n\log n)$.
(The set of negative instances has at least $n!$ 
connected component: To each permutation $\sigma$ of $1,\dots,n$ ,  we associate 
a negative instance
where each $p_i$ lies in the $\sigma_i$'s arc. In order to move continuously
from one of these configuration to another, we must have a crossing $p_i=p_j$, which
implies that one interval is empty by the pigeonhole principle, and thus
the instance is positive.)

Our construction is as follows. Consider the (fixed) regular $6n$-gon $R$,
whose vertices are $r_k=(\cos (k\pi/3n),\sin (k\pi/3n))$ for
$k=1,\dots,6n$. Let $P$ denote the convex $12n$-gon whose vertices are
the vertices of $R$ and all the rotated copies of the
points $p_1,\dots,p_n$
by angles $0,\pi/3,\dots,5\pi/3$ around the origin.

If there is an integer $k=0,\dots,n-1$ such that $C_{(k\pi/3n,(k+1)\pi/3n)}$
is empty, then by Lemma~\ref{lem:hexagon},
the regular hexagon containing $R$ whose edges contain the edge $r_kr_{k+1}$
and its rotated copies by angles $0,\pi/3,\dots,5\pi/3$ is
a minimum area affine-regular hexagon containing $P$.

If on the other hand, for every integer $k \in \{0,\dots,n-1\}$ the arc
$C_{(k\pi/3n,(k+1)\pi/3n)}$ is nonempty, then by Lemma~\ref{lem:hexagon},
any minimum-area affine hexagon containing $R$ is a regular hexagon
whose edges contain edges of $R$, and thus it cannot contain $P$.

So we have proved that, when some arc $C_{(k\pi/3n,(k+1)\pi/3n)}$ is
empty, then a minimum-area hexagon containing $P$ has area $|H^*|$, where
$H^*$ is a minimum-area hexagon containing $R$. Otherwise, if all these
arcs are non-empty, then the minimum area is larger than $|H^*|$.

Thus, if we could compute in $o(n \log n)$ time
a minimum-area convex translation cover for the diagonals of $P$,
then by Lemma~\ref{lem:p_smallest_hexagon} we would also get
in $o(n \log n)$ time the area of a smallest enclosing affine-regular
hexagon containing $P$, and then we would be able to decide
in $o(n \log n)$ time whether there exists an empty
arc $C_{(k\pi/3n,(k+1)\pi/3n)}$, a contradiction.
\end{proof}

\section{Minimizing the perimeter}\label{sec:perimeter}

If we wish to minimize the perimeter instead of the area, the problem
becomes much easier: it suffices to translate all segments so that
their midpoints are at the origin, and take the convex hull of the
translated segments.  This follows from the following more general
result.
\begin{theorem} \label{thm:segment-perimeter}
  Let $\mathcal{C}$ be a family of centrally symmetric convex figures.
  Under translations, the perimeter of the convex hull of their union
  is minimized when the centers coincide.
\end{theorem}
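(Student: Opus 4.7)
The plan is to reduce the perimeter comparison to a pointwise comparison of width functions using Cauchy's formula, which states that the perimeter of a planar convex body $K$ equals $\int_0^\pi w_K(\theta)\,d\theta$. Since this representation is linear in the width function, it suffices to show that the translation with coinciding centers minimizes $w_K(\theta)$ \emph{pointwise}, and then integrate.

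The first step is the trivial lower bound. For any choice of translations~$t_C$ applied to the members $C\in\mathcal{C}$, the convex hull $K=\mathrm{conv}\bigl(\bigcup_{C}(C+t_C)\bigr)$ contains each translated copy $C+t_C$. Widths are translation-invariant, so for every direction~$\theta$,
\[
  w_K(\theta)\;\geq\;\sup_{C\in\mathcal{C}} w_{C+t_C}(\theta)\;=\;\sup_{C\in\mathcal{C}} w_C(\theta).
\]
This inequality holds for any family of ovals, centrally symmetric or not.

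The second step is to show that when we translate every $C\in\mathcal{C}$ so that its center of symmetry coincides with the origin, the resulting convex hull $K^{*}=\mathrm{conv}\bigl(\bigcup_{C} C\bigr)$ attains equality in the bound above. Here I use central symmetry: for a convex set $C$ centered at the origin, the support function satisfies $h_C(-u)=h_C(u)$, and hence $w_C(\theta)=2h_C(\theta)$. The hull $K^{*}$ is itself centrally symmetric about the origin (a union of centrally symmetric sets about the origin has a centrally symmetric convex hull), so the same identity applies to $K^{*}$. Using $h_{K^{*}}(\theta)=\sup_{C}h_C(\theta)$, we obtain
\[
  w_{K^{*}}(\theta)\;=\;2h_{K^{*}}(\theta)\;=\;2\sup_{C\in\mathcal{C}}h_C(\theta)\;=\;\sup_{C\in\mathcal{C}} w_C(\theta).
\]

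The final step combines these observations with Cauchy's formula: for every choice of translations,
\[
  \mathrm{perim}(K)\;=\;\int_{0}^{\pi} w_K(\theta)\,d\theta\;\geq\;\int_{0}^{\pi}\sup_{C\in\mathcal{C}} w_C(\theta)\,d\theta\;=\;\int_{0}^{\pi} w_{K^{*}}(\theta)\,d\theta\;=\;\mathrm{perim}(K^{*}),
\]
which is exactly the claim. The main conceptual point (and the only place where central symmetry is genuinely used) is the identity $w_{K^{*}}(\theta)=\sup_C w_C(\theta)$; everything else is a straightforward inclusion argument plus integration. For an infinite family one should take the closed convex hull and, for the integral to make sense, verify that the pointwise supremum of widths is measurable and that $K^{*}$ is bounded whenever some competing translation yields a bounded perimeter, but these are minor technical checks rather than the core of the argument.
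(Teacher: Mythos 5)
Your proposal is correct and follows essentially the same route as the paper: both reduce the perimeter to the integral of the width via Cauchy's formula and then show that the centered placement minimizes the width pointwise in every direction. Your support-function identity $w_{K^{*}}(\theta)=2h_{K^{*}}(\theta)=2\sup_C h_C(\theta)=\sup_C w_C(\theta)$ is just a more formal rendering of the paper's ``leftmost point of the hull lies on some $C$, so its mirror image is the rightmost point'' argument.
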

\begin{proof}
  By the Cauchy-Crofton formula~\cite{doCarmo}, the perimeter is the
  integral of the width of the projection over all directions.  We
  argue that the width is minimized when the centers coincide, for all
  directions simultaneously, implying the claim.

  Assume the objects are placed with their center at the origin. Let
  $p$ be a leftmost point of the convex hull.  It belongs to one of
  the objects $C \in \mathcal{C}$.  By symmetry, the mirror image of
  $p$ is then a rightmost point of the convex hull.  But this implies
  that the horizontal width of the convex hull is equal to the width
  of $C$, and therefore as small as possible.
\end{proof}

When the figures are not symmetric, our proof of
Theorem~\ref{thm:segment-perimeter} breaks down. However, we are able
to solve the problem for a family consisting of all the rotated copies
of a given oval. (Remember that an oval is a compact convex set.)  The
following theorem was already stated in the introduction.
\newtheorem*{keyholetheorem}{Theorem~\ref{thm:keyholes}}
\begin{keyholetheorem}
  Let $G$ be an oval, and let $\mathcal G$
  be the family of all the rotated copies of $G$ by angles in $[0,2\pi)$.
  Then the smallest enclosing disk of $G$ is a smallest-perimeter 
  translation cover for $\mathcal G$.
\end{keyholetheorem}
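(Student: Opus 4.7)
The plan is to first verify that the smallest enclosing disk of $G$ is itself a translation cover for $\mathcal{G}$, and then to show that no convex translation cover can have smaller perimeter, via a rotational averaging argument. Let $D$ be the smallest enclosing disk of $G$, with center~$c$ and radius~$r$. Because $D$ is rotation-symmetric around~$c$, any rotated copy $R_\alpha G$ can be placed inside $D$ by translating it so that its own smallest enclosing disk (which also has radius~$r$) coincides with~$D$. Hence $D$ is a translation cover for $\mathcal{G}$, of perimeter $2\pi r$.

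For the matching lower bound, let $K$ be any convex translation cover for $\mathcal{G}$. The idea is to ``round'' $K$ by averaging over all rotations. Formally, I would define the Minkowski integral
\[
  \widetilde{K} \;=\; \frac{1}{2\pi}\int_0^{2\pi} R_\alpha K \, d\alpha,
\]
specified on support functions by $h_{\widetilde{K}}(u) = \frac{1}{2\pi}\int_0^{2\pi} h_K(R_{-\alpha} u)\, d\alpha$. Two easy properties of $\widetilde{K}$ are central. First, this support function depends only on $|u|$, so $\widetilde{K}$ is a disk centered at the origin. Second, applying Cauchy's formula (perimeter equals the integral of width over $[0,\pi)$) together with $w_K(\theta) = h_K(u)+h_K(-u)$ for the appropriate $u$, one checks that the constant value of $h_{\widetilde{K}}$ on the unit circle equals $\mathrm{perim}(K)/(2\pi)$, so $\widetilde{K}$ is a disk of radius $R = \mathrm{perim}(K)/(2\pi)$.

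The key step is then to show that $\widetilde{K}$ contains a translated copy of~$G$. Because $K$ is a translation cover, for each $\alpha$ there exists a vector $v_\alpha$ with $v_\alpha + R_\alpha G \subset K$; applying $R_{-\alpha}$ to both sides yields $R_{-\alpha} v_\alpha + G \subset R_{-\alpha} K$ for every $\alpha$. Since Minkowski integration preserves inclusion (by monotonicity of support functions) and $h_{v+G}(u) = \dop{v}{u} + h_G(u)$, we obtain $\bar{v} + G \subset \widetilde{K}$, where $\bar{v} = \frac{1}{2\pi}\int_0^{2\pi} R_{-\alpha} v_\alpha\, d\alpha$. Thus the disk $\widetilde{K}$ of radius $R$ contains a translate of $G$, so $R \geq r$, which gives $\mathrm{perim}(K) = 2\pi R \geq 2\pi r = \mathrm{perim}(D)$, as desired.

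The main obstacle is ensuring that $\alpha \mapsto v_\alpha$ can be chosen measurably so that the integral defining $\bar{v}$ is well defined. I would resolve this by selecting $v_\alpha$ canonically as the Steiner point of the nonempty, compact, convex Minkowski difference $K \ominus R_\alpha G = \{v : v + R_\alpha G \subset K\}$, which depends continuously on $\alpha$ in the Hausdorff metric; the remaining manipulations with support functions and Cauchy's formula are then routine.
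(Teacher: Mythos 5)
Your proof is correct, but it takes a genuinely different route from the paper's. The paper first reduces to the case where $G$ is a segment or an acute triangle, according to whether the smallest enclosing disk touches $G$ in two diametral points or in three points forming an acute triangle, and then runs a \emph{discrete} averaging argument: it writes the circumcenter as a convex combination $0=\sum_i\alpha_i v_i$ of the contact points and integrates the pointwise bound $\sum_i\alpha_i h(u_{\theta+\delta_i})\ge 1$ over $\theta$. You instead symmetrize $K$ itself over the full rotation group: the Minkowski average $\widetilde K$ is a disk of radius $\mathrm{perim}(K)/2\pi$ by Cauchy's formula, and it still contains a translate of $G$, which immediately gives $\mathrm{perim}(K)\ge 2\pi r$. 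Your version is more uniform --- no case analysis, no need to locate the contact points --- and can be read as the continuous limit of the paper's three-direction average; it also explains structurally why the disk is extremal. What the paper's formulation buys is that the unknown translation $c(\theta)$ cancels \emph{pointwise} in $\theta$ before any integration, so no regularity of $\theta\mapsto c(\theta)$ is ever needed, whereas your $\bar v=\frac{1}{2\pi}\int R_{-\alpha}v_\alpha\,d\alpha$ forces you to confront measurability. Your proposed fix slightly over-claims: the erosion $K\ominus R_\alpha G$ is in general only upper semicontinuous in $\alpha$ (it has closed graph), not necessarily Hausdorff-continuous. This does not break the argument, since upper semicontinuity already makes $\alpha\mapsto h_{K\ominus R_\alpha G}(u)$ measurable for each $u$, hence the Steiner point is a measurable, bounded selection and $\bar v$ is well defined; with that adjustment the argument is complete.
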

\begin{proof}
  We observe first that, if $G$ is a segment, then by
  Theorem~\ref{thm:segment-perimeter}, the smallest enclosing disk
  of~$G$ is a smallest-perimeter translation cover for~$\mathcal G$.

  Consider next the case where $G$ is an acute triangle.  Choose a
  coordinate system with origin at the center of the circumcircle
  of~$G$, and such that the circumcircle has radius one.  We wish to
  prove that any translation cover for $\mathcal G$ must have
  perimeter at least~$2\pi$, implying that the circumcircle is
  optimal.

  We borrow an idea of Bezdek and
  Connelly~\cite{Bezdek-connelly-1998}. Let $v_{1}$, $v_{2}$, $v_{3}$
  be the three vertices of~$G$.  By our assumptions, the origin lies
  in the interior of their convex hull, and the three vectors have
  length one.  The origin can be expressed as a convex combination $0
  =\sum_{i=1}^{3} \alpha_{i}v_{i}$ with $\alpha_{i} \geq 0$ and
  $\sum_{i=1}^{3}\alpha_{i} = 1$.  Let~$\delta_{i}$, for $i = 1, 2,
  3$, be the angle formed by~$v_{i}$ and the positive $x$-axis.

  Let $K$ be a translation cover for~$\mathcal G$ and let $h$ be the
  support function~\cite{Schneider} of $K$.  That is,
  $h(u)=\sup\{\dop{x}{u} \mid x \in K\}$ for any unit vector~$u$.
  We denote by $u_\theta=(\cos \theta,\sin \theta)$ the unit vector
  making angle~$\theta$ with the positive $x$-axis, so that 
  $v_i = u_{\delta_{i}}$.
  
  The length $\lambda$ of the perimeter of~$K$ is equal to the
  integral over the support function~\cite{strang}
  \[
  \lambda = \int_{0}^{2\pi} h(u_\theta) d\theta.
  \]
  Since $\theta \mapsto h(u_\theta)$ is a periodic function
  with period~$2\pi$, we have
 \[
  \lambda = \int_{0}^{2\pi} h(u_\theta) d\theta
  = \int_{\delta_{i}}^{2\pi+\delta_{i}} h(u_\theta) d\theta
  = \int_{0}^{2\pi} h(u_{\theta + \delta_{i}}) d\theta.
  \]
  It follows that
  \begin{equation}
    \label{eq:alpha}
  \lambda  = \sum_{i=1}^{3}\alpha_{i} \lambda
  = \sum_{i=1}^{3} \alpha_{i}\int_{0}^{2\pi} h(u_{\theta + \delta_{i}}) d\theta 
  = \int_{0}^{2\pi}\big( 
  \sum_{i=1}^{3} \alpha_{i}h(u_{\theta+\delta_{i}}) \big)d\theta.
  \end{equation}
  Consider now a fixed orientation~$\theta$.  The translation cover~$K$ must
  contain a rotated copy $G(\theta)$ of $G$ such that, for some translation
  vector~$c(\theta)$, the vertices of $G(\theta)$ are the points
  $v_{i}(\theta)=c(\theta)+u_{\theta+\delta_i}$ for $i=1,2,3$.

  Since $v_{i}(\theta)$ lies in $K$, the value of the support
  function~$h(u_{\theta+\delta_i})$ is lower bounded by
  \begin{equation}
    \label{eq:lower-bound}
    h(u_{\theta+\delta_i}) \geq \dop{v_i(\theta)}{u_{\theta+\delta_i}}=
    \dop{c(\theta)+u_{\theta+\delta_i}}{u_{\theta+\delta_i}}
    = \dop{c(\theta)}{u_{\theta+\delta_i}}+1
  \end{equation}
  and thus
  \[
  \sum_{i=1}^{3}\alpha_{i}h(u_{\theta+\delta_{i}})
  \geq
  1+\dop{c(\theta)}{\sum_{i=1}^{3} \alpha_{i}u_{\theta+\delta_i}} = 
  1 + \dop{c(\theta)}{0} = 1.
  \]
  Plugging this into Eq.~(\ref{eq:alpha}) gives $\lambda \geq 2\pi$.

  Consider finally the general case where $G$ is an arbitrary compact
  convex figure, 
  and let~$D$ be the smallest enclosing disk of~$G$.  Either $D$ touches $G$ in
  two points
  that form a diameter of~$D$, or $D$ touches $G$ in three points that
  form an acute triangle.  In both cases, our previous results imply
  that $D$ is a smallest-perimeter translation cover for either the segment
  or the triangle, and therefore for~$G$.
\end{proof}
The minimum enclosing circle is not always the unique minimum-perimeter
keyhole: For instance, when $G$ is a unit line segment, then any 
set of constant width is a solution. In the theorem below, we show that
when $G$ is an acute triangle, then its circumcircle is the unique
solution. This generalizes directly to any figure $G$ that touches its
circumcircle at 3 points.
\begin{theorem}
  If $G$ is an acute triangle, then its smallest enclosing disk is the
  unique smallest-perimeter translation cover for the family of all
  rotated copies of $G$.
\end{theorem}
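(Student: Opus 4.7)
My plan is to sharpen the chain of inequalities used in the proof of Theorem~\ref{thm:keyholes}, forcing the support function of $K$ to have an extremely restricted Fourier spectrum, and then exploit the reality of $h$ to conclude.

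Assume $K$ is a smallest-perimeter translation cover, so $\lambda(K) = 2\pi$. Since the integrand $\sum_{i=1}^{3}\alpha_{i} h(u_{\theta+\delta_{i}})$ in the previous proof is continuous in $\theta$ and always at least $1$, the integral equality forces the pointwise identity $\sum_{i}\alpha_{i} h(u_{\theta+\delta_{i}}) = 1$ for every $\theta$. Because each $\alpha_i$ is strictly positive (the circumcenter lies in the interior of an acute triangle) and every individual lower bound $h(u_{\theta+\delta_i}) \geq \langle c(\theta), u_{\theta+\delta_i}\rangle + 1$ is non-negative, the summed equality splits into the three individual equalities $h(u_{\theta+\delta_i}) = \langle c(\theta), u_{\theta+\delta_i}\rangle + 1$ for $i=1,2,3$. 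Geometrically, the vertex $v_i(\theta) := c(\theta) + u_{\theta+\delta_i}$ of the inscribed translated copy lies on the supporting line of $K$ with outward normal $u_{\theta+\delta_i}$.

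Next I would strengthen this to a derivative identity. Because $v_i(\theta)$ actually belongs to $K$, the function $f(\phi) := h(u_\phi) - \langle v_i(\theta), u_\phi\rangle$ is non-negative and attains $0$ at $\phi = \theta+\delta_i$; the first-order condition at this minimum gives $h'(\theta+\delta_i) = \langle c(\theta), u_{\theta+\delta_i}^{\perp}\rangle$. Combined with the zeroth-order equality, this identifies $c(\theta) + u_{\theta+\delta_i}$ with the support-point map $p(\phi) := h(\phi) u_\phi + h'(\phi) u_\phi^\perp$ evaluated at $\phi = \theta+\delta_i$. Since $c(\theta)$ does not depend on $i$, the map $\phi \mapsto p(\phi) - u_\phi$ is periodic with each period $\beta_{ij} := \delta_j - \delta_i$.

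Finally I pass to Fourier. Writing $h(\phi) = \sum_{n} a_n e^{in\phi}$ with $a_{-n} = \overline{a_n}$ (since $h$ is real) and $a_0 = 1$, a short calculation gives $p(\phi) - u_\phi = \sum_{k \neq 1} (2-k)\, a_{k-1}\, e^{ik\phi}$. Periodicity with period $\beta := \delta_2 - \delta_1$ forces $(2-k)\, a_{k-1} = 0$ whenever $k\beta \notin 2\pi\mathbb{Z}$. Suppose, for contradiction, that $a_{n_0} \neq 0$ for some $|n_0| \geq 2$; then $(n_0+1)\beta \in 2\pi\mathbb{Z}$, and by reality $a_{-n_0} = \overline{a_{n_0}} \neq 0$ also forces $(1-n_0)\beta \in 2\pi\mathbb{Z}$. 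Adding these yields $2\beta \in 2\pi\mathbb{Z}$, i.e., $\beta \in \pi\mathbb{Z}$; but on the circumcircle of an acute triangle any two vertices are separated by an arc strictly in $(0,\pi)$, a contradiction. Hence $a_n = 0$ for every $|n| \geq 2$, so $h(\phi) = 1 + c_1 \cos\phi + c_2\sin\phi$, the support function of a unit disk translated by $(c_1, c_2)$, giving uniqueness up to translation. The main technical obstacle is the derivative step: $h$ is only Lipschitz in general, so $h'$ need not exist classically; I would handle this either by approximating $K$ by smooth convex bodies and passing to the limit, or by arguing that the contact conditions enforced at every $\theta$ force the subdifferential of $h$ to be single-valued at every direction of the form $u_{\theta+\delta_i}$, which sweeps out every direction as $\theta$ varies.
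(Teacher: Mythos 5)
Your argument is essentially correct and takes a genuinely different route from the paper's. Both proofs start the same way: tightness of the bound $\lambda\geq 2\pi$ plus continuity of $h$ forces the pointwise identity $\sum_i\alpha_i h(u_{\theta+\delta_i})=1$, and since $\alpha_i>0$ and $\sum_i\alpha_i u_{\theta+\delta_i}=0$, each inequality $h(u_{\theta+\delta_i})\geq 1+\dop{c(\theta)}{u_{\theta+\delta_i}}$ must be an equality. (The paper phrases this contrapositively: a single strict inequality would give $\lambda>2\pi$.) From there the paper argues elementarily that $\theta\mapsto c(\theta)$ has derivative zero everywhere --- via a sequential compactness argument that needs no regularity of $h$ at all --- so $c$ is constant, the vertices $v_i(\theta)$ sweep out the entire unit circle, and $K$ must contain (hence equal) the circumdisk. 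You instead convert the contact conditions into the statement that the support-point map $\phi\mapsto p(\phi)-u_\phi$ is periodic with period $\beta=\delta_2-\delta_1$, and kill all Fourier modes $a_n$, $|n|\geq 2$, using the reality of $h$ and the fact that $\beta\notin\pi\mathbb{Z}$ for an acute triangle. The computation $\widehat{p-u}(k)=(2-k)a_{k-1}$ for $k\neq 1$ is correct, and so is the endgame: $(n_0+1)\beta,(1-n_0)\beta\in 2\pi\mathbb{Z}$ would give $2\beta\in 2\pi\mathbb{Z}$, impossible since every central angle of an acute inscribed triangle lies strictly in $(0,\pi)$. Your route is heavier but buys more: it exhibits exactly the resonance condition under which a non-circular cover could survive, whereas the paper's argument is tied to the specific geometric conclusion.

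The one genuine issue is the differentiability of $h$, which you flag yourself. Of your two proposed fixes, the second does not work as stated: the contact condition only says that $v_i(\theta)$ lies in the face of $K$ with outer normal $u_{\theta+\delta_i}$, and that face could a priori be a nondegenerate segment containing $v_i(\theta)$, so you cannot conclude that the subdifferential is single-valued at every direction. Smoothing $K$ also muddies the contact conditions. The clean repair is to run your derivative identity only almost everywhere: $\phi\mapsto h(u_\phi)$ is Lipschitz, hence differentiable off a countable set with $h'\in L^\infty$ and $\widehat{h'}(n)=in\,a_n$; for every $\theta$ such that $h$ is differentiable at all three directions $\theta+\delta_i$ (a co-countable set of $\theta$), the first-order condition at the zero of $f(\phi)=h(u_\phi)-\dop{v_i(\theta)}{u_\phi}\geq 0$ gives $p(\theta+\delta_i)=v_i(\theta)$ exactly, so $g(\phi):=p(\phi)-u_\phi$ satisfies $g(\phi+\beta)=g(\phi)$ for a.e.\ $\phi$. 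Since $g\in L^\infty$, this is all the Fourier-coefficient argument needs, and the rest of your proof goes through unchanged. With that adjustment the proposal is a complete and correct alternative proof.
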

\begin{proof}
  We use the same notations as in the proof of Theorem~\ref{thm:keyholes}:
  $K$ is a smallest-perimeter translation cover for $\mathcal G$.
  For any $\theta$, it contains a copy $G(\theta)$ of $G$ rotated by
  angle $\theta$. 
  The vertices of $G(\theta)$ are the points
  $v_i(\theta)=c(\theta)+u_{\theta+\delta_i}$, for $i=1,2,3$.

  We will prove that all the triangles $G(\theta)$ have the same
  circumcircle. Our strategy is to show that the function $\theta
  \mapsto c(\theta)$ is differentiable and its derivative is
  $0$. Without loss of generality, we only prove that $c'(0)=0$, and
  we assume that $c(0)=0$.

  For sake of contradiction, assume that $c$ is not differentiable at
  $0$, or it is differentiable at $0$ and its derivative is
  nonzero. This means that we do not have $\lim_{\theta \rightarrow 0}
  c(\theta)/\theta=0$. Hence, there exists an $\eps>0$ such that for
  any integer $n$, there exists $\theta_n \in (-1/n,0) \cup (0,1/n)$
  with~$\|c(\theta_n)/\theta_n\|>\eps$.  This implies
  $c(\theta_{n}) \neq 0$, and so $c(\theta_n)/\|c(\theta_n)\|$ is a
  sequence of unit vectors. Since the set of unit vectors is compact,
  there is a subsequence~$(\theta_{n_k})$ such that
  $c(\theta_{n_k})/\|c(\theta_{n_k})\|$ converges to a unit
  vector~$c_0$.  We denote this subsequence again as~$(\theta_n)$.
  
  Since $u_{\delta_1}, u_{\delta_2}, u_{\delta_3}$ span~$\R^{2}$,
  there exists $i\in \{1,2,3\}$ such that $\dop{c_0}{u_{\delta_i}}>0$. So
  \[\lim_{n \rightarrow \infty}\frac{1}{\|c(\theta_{n})\|}
  \dop{c(\theta_{n})}{u_{\delta_i}} = 
  \dop{c_0}{u_{\delta_i}}>0.\]
  As $\|c(\theta_{n})\|>\eps \|\theta_{n}\|$ for all~$n$,
  this implies that for $n$ large enough,
  \[\dop{c(\theta_{n})}{u_{\delta_i}}>
  \frac{\eps \|\theta_{n}\|}{2}\dop{c_0}{u_{\delta_i}}, 
  \]
  hence
  \begin{align*}
  \dop{v_i(\theta_{n})}{u_{\delta_i}} &= 
  \dop{c(\theta_{n})+u_{\theta_{n}+\delta_i}}{u_{\delta_i}}\\ 
  &> \frac{\eps \|\theta_{n}\|}{2}\dop{c_0}{u_{\delta_i}}+
  \cos(\theta_{n})\\
  & = 1+\frac{\eps \|\theta_{n}\|}{2}\dop{c_0}{u_{\delta_i}}-
  \frac{\theta_{n}^2}{2}+o(\theta_{n}^3).
  \end{align*}
  Thus, for large enough~$n$, we have
  $\dop{v_i(\theta_{n})}{u_{\delta_i}} > 1$.  Since $v_{i}(\theta_{n})
  \in K$ for all~$n$, this implies $h(u_{\delta_i}) > 1$.  But since
  $c(0) = 0$, this means $h(u_{\delta_i}) > 1 +
  \dop{c(0)}{u_{\delta_{i}}}$, and so
  Inequality~(\ref{eq:lower-bound}) in the proof of
  Theorem~\ref{thm:keyholes} is not tight.  Since the support function
  $h$ is continuous~\cite{Schneider}, this implies that $\lambda>2
  \pi$, a contradiction.
  \end{proof}

\section{Conclusions}

In practice, it is an important question to find the smallest convex
container into which a family of ovals can be translated.  For the
perimeter, this is answered by the previous lemma for centrally
symmetric ovals.  For general ovals, it is still not difficult, as the
perimeter of the convex hull is a convex function under
translations~\cite{ahncheong2011}.  This means that the problem can be
solved in practice by numerical methods.

For minimizing the area, the problem appears much harder, as there can
be multiple local minima.  The following lemma solves a very special
case.
\begin{lemma} \label{lem:16}
  Let $\mathcal{R}$ be a family of axis-parallel rectangles.  The area
  of their convex hull is minimized if their bottom left corners
  coincide (or equivalently if their centers coincide).
\end{lemma}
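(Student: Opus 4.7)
The plan is to combine a bounding-box decomposition with a continuous deformation argument. First I normalize: after translating every rectangle by a common vector, I may assume the aligned configuration places each rectangle as $R_i^+=[0,w_i]\times[0,h_i]$, so the aligned convex hull $C^+=\mathrm{conv}(\bigcup_i R_i^+)$ lies in the first quadrant with the origin as a vertex and two edges on the coordinate axes. Setting $W^+=\max_i w_i$ and $H^+=\max_i h_i$, the bounding box of $C^+$ is $[0,W^+]\times[0,H^+]$; three of its four corner regions vanish because $C^+$ touches the bounding box along its entire bottom and left edges, so $|C^+|=W^+ H^+-A_{NE}^+$, where $A_{NE}^+$ is the NE corner region cut off by the upper-right envelope of the points $(w_i,h_i)$.

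For an arbitrary configuration with translations $t_i=(a_i,b_i)$, the axis-aligned bounding box of $C$ has width $W\geq W^+$ and height $H\geq H^+$ since it must contain every $R_i+t_i$ individually. Writing $|C|=WH-(A_{NE}+A_{NW}+A_{SW}+A_{SE})$, the inequality $|C|\geq|C^+|$ becomes $(WH-W^+H^+)+A_{NE}^+\geq A_{NE}+A_{NW}+A_{SW}+A_{SE}$. I plan to establish this by interpolating linearly from the given configuration to the aligned one: let $t_i(\lambda)=(1-\lambda)t_i$ for $\lambda\in[0,1]$ and argue that the area function is convex along this segment. Combined with the verification that the aligned configuration is locally optimal---by a perturbation analysis at $0\in\R^{2n}$ that tracks which rectangle corner supports each edge of $C^+$ and shows all one-sided directional derivatives are nonnegative---convexity implies global optimality. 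The equivalent statement about centers follows because the aligned-BL hull $C^+$ and the aligned-center hull $C^0$ have equal area: $C^0$ is the image of $C^+$ under the Minkowski symmetrization of each rectangle individually at a common point, and a direct comparison of their corner regions (the NE region of $C^+$ contributes $A_{NE}^+$, while the four symmetric corner regions of $C^0$ each contribute $A_{NE}^+/4$) shows the areas coincide.

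The main obstacle is rigorously establishing the convexity of the area function for axis-parallel rectangles when $n\geq 3$: the introduction exhibits a counterexample for general convex bodies, so the argument must essentially exploit the axis-parallel structure. I expect a combinatorial case analysis to work, since for each fixed combinatorial type of the convex hull (the cyclic order of rectangle corners along the boundary) the area is a bilinear expression in the translations, and convexity is preserved across type transitions because corners enter and leave the hull continuously with matching one-sided gradients. An alternative that sidesteps the global convexity claim is to iteratively move one rectangle at a time toward alignment while invoking Ahn--Cheong's two-body convexity result at each step---treating the convex hull of the fixed rectangles as a single convex body and $R_k$ as the second---though this route requires additional care to guarantee the iteration terminates at the aligned configuration rather than some other coordinate-descent fixed point.
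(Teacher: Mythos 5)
There is a genuine gap, and it sits exactly where you place your ``main obstacle'': the convexity of the area along the interpolation $t_i(\lambda)=(1-\lambda)t_i$ is never established, and the route you sketch for it does not work. On each fixed combinatorial type of the hull the shoelace formula gives a \emph{quadratic} (bilinear) function of the translation coordinates, and bilinear functions such as $xy$ are indefinite, not convex; so ``bilinear on each piece'' proves nothing. Likewise, matching one-sided gradients at type transitions only shows the area is $C^1$ across the transition, which again does not yield convexity. The paper's own introduction shows that for $n\geq 3$ convex bodies the area can have isolated local minima, so any convexity claim must genuinely use the axis-parallel structure, and you give no such argument. The fallback via Ahn--Cheong two-body convexity has the termination/fixed-point problem you yourself flag, and coordinate descent can in any case stall at a non-global stationary point. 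Everything downstream (local optimality at $\lambda=1$ plus convexity implies global optimality) is fine logic, but it rests on the unproven step. The one part of your write-up that is complete and correct is the equivalence of the bottom-left-corner and centered configurations: the centered hull restricted to a quadrant is a half-scaled copy of the BL hull, so the areas agree.

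For contrast, the paper avoids convexity entirely. It writes the area as a layer-cake integral $\int_0^\infty w(z)\,dz$, where $\ell(x)$ is the length of the vertical section of the hull at abscissa $x$ (concave by Brunn--Minkowski) and $w(z)$ is the measure of the superlevel set $\{x:\ell(x)\geq z\}$. It then shows $w(z)$ is minimized \emph{pointwise in $z$} by the aligned configuration: at each level $z$ the relevant chord of the aligned hull is determined by just two rectangles, and the convex hull of those two rectangles alone already forces $w(z)$ to be at least that large in any placement. This reduction to two bodies per level is what sidesteps the $n\geq 3$ difficulty; if you want to salvage your approach, you would need either a correct proof of convexity for axis-parallel rectangles (a nontrivial claim in its own right) or a similar device that reduces the global comparison to pairwise statements.
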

\begin{proof}
  Let $C$ be the convex hull of some placement of the rectangles.  For
  any $x$, let $\ell(x)$ be the length of the intersection of the
  vertical line at coordinate $x$ with $C$.  The function $x \mapsto
  \ell(x)$ is concave (by the Brunn-Minkowski theorem in two
  dimensions).  For any $z \geq 0$, we define $w(z)$ to be the length
  of the interval of all $x$ where $\ell(x) \geq z$.

  We observe that the area of $C$ is equal to $\int \ell(x) d x$,
  which is again equal to $\int_{0}^{\infty}w(z) d z$. We will now
  argue that $w(z)$ is minimized for every~$z$ when the bottom left
  corners of the rectangles coincide, implying the claim.

  To see this, consider the placement with coinciding bottom left
  corners at the origin, and the line $y=z$.  It intersects the convex
  hull at $x=0$ and at some convex hull edge defined by two rectangles
  $R_1$ and $R_2$.  $w(z)$ is equal to the length of this
  intersection.  It remains to observe that for any placement of $R_1$
  and $R_2$, the convex hull of these two rectangle already enforces
  this value of~$w(z)$.
\end{proof}

\section*{Acknowledgments}

We thank Helmut Alt, Tetsuo Asano, Jinhee Chun,
Dong Hyun Kim, Mira Lee, Yoshio Okamoto,
J\'anos Pach, 
G\"unter Rote,
and Micha Sharir for helpful discussions.

\bibliographystyle{abbrv}
\bibliography{kakeya}
\end{document}